\begin{document}

\title{Codebook-Based Opportunistic Interference Alignment}
\author{\large Hyun Jong Yang, \emph{Member}, \emph{IEEE}, Bang Chul Jung, \emph{Member}, \emph{IEEE}, \\ Won-Yong Shin, \emph{Member}, \emph{IEEE}, and
Arogyaswami Paulraj, \emph{Fellow}, \emph{IEEE} \\
\thanks{The material in this paper was presented in part at the
IEEE International Symposium on Information Theory, Cambridge, MA,
July 2012.}
\thanks{H. J. Yang is with the School of Electrical and Computer Engineering, UNIST, Ulsan 689-798, Republic of Korea (E-mail:
hjyang@unist.ac.kr).}
\thanks{B. C. Jung is with the Department of Information and Communication Engineering, Gyeongsang National
University, Tongyeong 650-160, Republic of Korea (E-mail:
bcjung@gnu.ac.kr).}
\thanks{W.-Y. Shin (corresponding author) is with the Department of Computer Science and
Engineering, Dankook University, Yongin 448-701, Republic of Korea
(E-mail: wyshin@dankook.ac.kr).}
\thanks{A. Paulraj is with the Department of Electrical Engineering,
Stanford University, Stanford, CA 94305 (email:
apaulraj@stanford.edu). }
        %\thanks{The authors are with the School of Electrical Engineering and Computer Science, Korea Advanced Institute of Science and Technology (KAIST), Daejeon 305-701, Korea (E-mail: wyshin@stein.kaist.ac.kr; sychung@ee.kaist.ac.kr; yohlee@ee.kaist.ac.kr).}
} \maketitle

%\title{On the Power-Delay Trade-off in Wireless Ad Hoc Networks With Fading}
%\author{\authorblockN{Won-Yong Shin, Sae-Young Chung, and Yong H.
%Lee}\\
%\authorblockA{School of EECS, Korea Advanced Institute of Science and Technology, \\Daejeon, Korea\\
%Email: wyshin@stein.kaist.ac.kr; \{sychung, yohlee\}@ee.kaist.ac.kr}
%} \maketitle

\markboth{Submitted to IEEE Transactions on Signal Processing} {Yang
{\em et al.}: Codebook-Based Opportunistic Interference Alignment}

%%%%%%%%%%%%%%%%%%%%%%%%%%%%%%%%%%%%%%%%%%%%%%%%%%%%%%%%%%%%%%%%%%%%%%%%%%%%%%%%%%%%%%%%%%%%%%%%%%%%%%%%%%%%%%%%%%%%%%%%%%%%%%%%%%%%%%%%%%%%%%%%%%%%%

\newtheorem{definition}{Definition}%[section]
\newtheorem{theorem}{Theorem}%[section]
\newtheorem{lemma}{Lemma}%[section]
\newtheorem{example}{Example}
\newtheorem{corollary}{Corollary}%[section]
\newtheorem{proposition}{Proposition}%[section]
\newtheorem{conjecture}{Conjecture}%[section]
\newtheorem{remark}{Remark}%[section]

\def \diag{\operatornamewithlimits{diag}}
\def \min{\operatornamewithlimits{min}}
\def \max{\operatornamewithlimits{max}}
\def \log{\operatorname{log}}
\def \max{\operatorname{max}}
\def \rank{\operatorname{rank}}
\def \out{\operatorname{out}}
\def \exp{\operatorname{exp}}
\def \arg{\operatorname{arg}}
\def \E{\operatorname{E}}
\def \tr{\operatorname{tr}}
\def \SNR{\operatorname{SNR}}
\def \dB{\operatorname{dB}}
\def \ln{\operatorname{ln}}

\def \be {\begin{eqnarray}}
\def \ee {\end{eqnarray}}
\def \ben {\begin{eqnarray*}}
\def \een {\end{eqnarray*}}

\begin{abstract}
Opportunistic interference alignment (OIA) asymptotically achieves
the optimal degrees-of-freedom (DoF) in interfering multiple-access
channels (IMACs) in a distributed fashion, as a certain user scaling
condition is satisfied. For the multiple-input multiple-output IMAC,
it was shown that the singular value decomposition (SVD)-based
beamforming at the users fundamentally reduces the user scaling
condition required to achieve any target DoF compared to that for
the single-input multiple-output IMAC. In this paper, we tackle two
practical challenges of the existing SVD-based OIA: 1) the need of
full feedforward of the selected users' beamforming weight vectors
and 2) a low rate achieved based on the exiting zero-forcing (ZF)
receiver. We first propose a codebook-based OIA, in which the weight
vectors are chosen from a pre-defined codebook with a finite size so
that information of the weight vectors can be sent to the belonging
BS with limited feedforward. We derive the codebook size required to
achieve the same user scaling condition as the SVD-based OIA case
for both Grassmannian and random codebooks. Surprisingly, it is
shown that the derived codebook size is the same for the two
considered codebook approaches. Second, we take into account an
enhanced receiver at the base stations (BSs) in pursuit of improving
the achievable rate based on the ZF receiver. Assuming no
collaboration between the BSs, the interfering links between a BS
and the selected users in neighboring cells are difficult to be
acquired at the belonging BS. We propose the use of a simple minimum
Euclidean distance receiver operating with no information of the
interfering links. With the help of the OIA, we show that this new
receiver asymptotically achieves the channel capacity as the number
of users increases.
\end{abstract}

\begin{keywords}
Codebook, degrees-of-freedom (DoF), opportunistic interference
alignment (OIA), interfering multiple-access channel (IMAC), limited
feedforward.
\end{keywords}

\newpage

%%%%%%%%%%%%%%%%%%%%%%%%%%%%%%%%%%%%%%%%%%%%%%%%%%%%%%%%%%%%%%%%%%%%%%%%%%%%%%%%%%%%%%%%%%%%%%%%%%%%%%%%%%%%%%%%%%%%%%%%%%%%%%%%%%%%%%%%%%%%%%%%%%%%%

\section{Introduction}
Interference alignment (IA) \cite{V_Cadambe08_TIT,
M_MaddahAli08_TIT} is the key ingredient to achieve the optimal
degrees-of-freedom (DoF) for a variety of interference channel
models. The conventional IA framework, however, has several
well-known practical challenges: global channel state information
(CSI) and arbitrarily large frequency/time-domain dimension
extension. Recently, the concept of opportunistic interference
alignment (OIA) was introduced in \cite{B_Jung11_CL,B_Jung11_TC},
for the $K$-cell single-input multiple-output (SIMO) interfering
multiple-access channel (IMAC), where there are one $M$-antenna base
station and $N$ users in each cell. In the OIA scheme for the SIMO
IMAC, $S$ ($S\le M$) users amongst the $N$ users are
opportunistically selected in each cell in the sense that inter-cell
interference is aligned at a pre-defined interference space. Even if
several studies have independently addressed one or a few of the
practical problems~(see~\cite{C_Suh08_Allerton,
K_Comadam08_GLOBECOM}), the OIA scheme simultaneously resolves the
aforementioned issues. Specifically, the OIA scheme operates with i)
local CSI acquired via pilot signaling, ii) no dimension extension
in the time/frequency domain, iii) no iterative optimization of
precoders, and iv) no coordination between the users or the BSs. It
has been shown that there exists a trade-off between the the
achievable DoF and the number of users, which can be characterized
by a \textit{user scaling condition} \cite{B_Jung11_TC,
H_Yang13_TWC, H_Yang12_ISIT}. Similarly, the analysis of the scaling
condition of some system parameters required to achieve a target
performance have been widely studied to provide a remarkable insight
into the convergence rate to the target performance, e.g., the user
scaling condition to achieve target DoF for the IMAC
\cite{B_Jung11_CL,B_Jung11_TC, H_Yang13_TWC, H_Yang12_ISIT}, the
scaling condition of the number of feedback bits to achieve the
optimal DoF for multiple-input multiple-output (MIMO) interference
channels \cite{J_Thukral09_ISIT, R_Krishnamachari10_ISIT}, and the
codebook size scaling condition to achieve the target achievable
rate for limited feedback MIMO systems
\cite{B_Mondal06_TSP,T_Yoo07_JSAC,N_Jindal06_TIT}. For the SIMO
IMAC, the OIA scheme asymptotically achieves $KS$ DoF, for $0<S\le
M$, if the number of per-cell users, $N$, scales faster than
$\textrm{SNR}^{(K-1)S}$ \cite{B_Jung11_TC}, where SNR denotes the
received signal-to-noise ratio (SNR). Note that the optimal DoF is
achieved when $S=M$.

For the MIMO IMAC, where each user has $L$ antennas, the user
scaling condition to achieve $KS$ DoF can be greatly reduced to
$\textrm{SNR}^{(K-1)S-L+1}$ with the use of singular value
decomposition (SVD)-based beamforming at each user, by further
minimizing the generating interference level~\cite{H_Yang13_TWC}.
%Note that the optimal DoF of the MIMO IMAC, which are given by $KM$,
%remain the same as in the SIMO IMAC case.
%The main tool of the OIA framework to characterize the trade-off between . is the establishment of the user scaling condition which characterizes the trade-off between . If the users have multiple antennas, i.e., in multi-input multi-output (MIMO) IMAC, it was shown \cite{H_Yang13_TWC, H_Yang12_ISIT} that the singular value decomposition (SVD)-based fundamentally reduces the user scaling condition.
However, to implement the SVD-based OIA with local CSI and no
coordination between the users or the BSs, each beamforming weight
vector is computed at each user, and then information of the
selected users' weight vectors should be sent to the corresponding
BS for the coherent detection. In addition, although the OIA based
on the zero-forcing (ZF) receiver at the BSs is sufficient to
achieve the optimal DoF, its achievable rate is in general far below
the channel capacity, and the gap increases as the dimension of
channel matrices grows.
 In this paper, we would like to answer the aforementioned two practical issues of the SVD-based OIA.

In recent cellular systems such as the 3GPP Long Term Evolution
\cite{TS36.213}, each selected user should transmit an uplink pilot
(known as Sounding Reference Signal in 3GPP systems) so that the
corresponding BS estimates the uplink channel matrix, which is
widely used for channel quality estimation, downlink signal design
assuming the channel reciprocity in time division duplexing (TDD)
systems, etc. The effective channel matrices rotated by the weight
vectors should also be known so that the BSs perform coherent
detection---the matrices can be acquired by the BSs through either
of the following two methods: i) additional dedicated time/frequency
pilot (known as Demodulation Reference Signal in 3GPP systems
\cite{TS36.213}), where the pilots are rotated by weight vectors
\cite{L_Choi04_TWC, Z_Pan04_TWC} and ii) limited feedforward of the
indices of the weight vectors (as included in Downlink Control
Information Format 4 \cite{TS36.212}). For the first method,
however, the system capacity can be degraded as the number of
selected users increases due to the increased pilot overhead
\cite{C_Chae08_TSP, I_Hwang10_USPatent}. For a reliable
transmission, the length of pilot signaling also needs to be
sufficiently long~\cite{J_Jose11_TVT,D_Samardzija06_CISS}.
Furthermore, in cellular networks, long training sequences or
disjoint pilot resources for all users in each cell are required to
avoid the pilot contamination coming from the inter-cell
interference \cite{J_Jose11_TWC}. For these reasons, practical
communication systems such as the 3GPP standard allow highly limited
resources for uplink pilot. On the other hand, the second method
using the limited feedforward is preferable especially for the MIMO
IMAC in the sense that feedforward information can be flexibly
multiplexed with uplink data requiring no additional pilot resource.
%In addition, the use of dedicated pilots is not allowed in the
%Releases 8 and 9 of the 3GPP systems
%\cite{TS36.211_Rel8,TS36.211_Rel9}.
Several studies \cite{C_Chae08_TSP,
I_Hwang10_USPatent,L_Equigua11_SPL} have addressed the same issues
on the feedforward of the weight vectors for multiuser MIMO systems,
and have proposed the design of codebook-based precoding matrices.

In the first part of this paper, we introduce a codebook-based OIA
scheme, where weight vectors are chosen from a pre-defined codebook
with a finite size such that information of the weight vectors of
selected users is sent to the corresponding BS via limited
feedforward signaling. Two widely-used codebooks, the Grassmannian
and random codebooks, are used. Surprisingly, although the
granularity of the Grassmannian codebook is higher than that of the
random codebook for a given codebook size, our result indicates that
for both codebook approaches, the codebook size, required to achieve
the same user scaling condition as the SVD-based OIA case,
coincides. It is also shown that the required codebook size in bits
increases linearly with the number of transmit antennas and
logarithmically with the received SNR, i.e., the required codebook
size scales as $L\log_2\textrm{SNR}$.
%Interestingly, the results indicate that no larger codebook size than the derived ones is required to achieve the DoF of the SVD-based OIA for given number of users and SNR.

In the second part, we propose a receiver design at the BSs in
pursuit of improving the achievable rate based on the ZF receiver.
The design is challenging in the sense that local CSI and no
coordination between any BSs are assumed, thus resulting in no
available information of the interfering links at each BS. Thus, the
maximum likelihood (ML) decoding is not possible at each BS since
the covariance matrix of the effective noise cannot be estimated due
to no information of the interfering links. We propose the use of a
simple minimum Euclidean distance receiver, where the ML
cost-function is used by assuming the identity noise covariance
matrix, which does not require information of the interfering links.
We show that this receiver asymptotically achieves the channel
capacity as the number of users increases.

Simulation results are provided to justify the derived user and
codebook size scaling conditions and to evaluate the performance of
the minimum Euclidean distance receiver. A practical scenario, e.g.,
low SNR, small codebook size, and a small number of users, is taken
into account to show the robustness of our scheme.

The remainder of this paper is organized as follows. Section
\ref{SEC:system} describes the system and channel models and Section
\ref{SEC:CB_OIA} presents the proposed codebook-based OIA scheme.
Section \ref{sec:achievability} derives the user and codebook size
scaling conditions of the proposed OIA scheme along with two
different codebooks and Section \ref{sec:acvanced_Rx} derives the
asymptotic performance of the minimum Euclidean distance receiver.
Section \ref{SEC:Sim} performs the numerical evaluation. Section
\ref{SEC:Conc} summarizes the paper with some concluding remarks.
%ow that the convergence rates of the proposed schemes with the random and Grassmannian codebooks are the same with respect to the number of users, justifying the derived user scaling conditions. It is shown that for relatively small number of transmit antennas, e.g., $L=2$, less than 10 feed forward bits closely achieves the achievable rate of the SVD-based OIA for given $N$.
 % In addition, simulation results show that the achievable rate of the minimum distance receiver approaches to the channel capacity as the number of users increases, and is higher than the achievable rate of the ZF receiver even with practical parameters, i.e., low SNR and small codebook size, with relatively large $N$.
%with limited feedforward limited feedforward is much more practical than the schemes with the assumption of the full knowledge of the effective channels multiuser MIMO precoding schemes

\textit{Notations:} $\mathbb{C}$ indicates the field of complex
numbers. $(\cdot)^{\textrm{T}}$ and $(\cdot)^{\textrm{H}}$ denote
the transpose and the conjugate transpose, respectively.
$\mathbf{I}_n$ is the $(n\times n)$-dimensional identity matrix.

%%%%%%%%%%%%%%%%%%%%%%%%%%%%%%%%%%%%%%%%%%%%%%%%%%%%%%%%%%%%%%%%%%%%%%%%%%%%%%%%%%%%%%%%%%%%%%%%%%%%%%%%%%%%%%%%%%%%%%%%%%%%%%%%%%%%%%%%%%%%%%%%%%
\section{System and Channel Models}\label{SEC:system}
Consider the TDD MIMO IMAC with $K$ cells, each of which consists of
a BS with $M$ antennas and $N$ users, each having $L$ antennas, as
depicted in Fig. \ref{fig:sys}. It is assumed that each selected
user transmits a single spatial stream. In each cell, $S$ ($S\le M$)
users are selected for uplink transmission.
%To consider nontrivial cases, we assume that $L < (K-1)S +1$, because all the inter-cell interference can be canceled at the users otherwise.
Let $\mathbf{H}_{k}^{[i,j]}\in \mathbb{C}^{M \times L}$ denote the
channel matrix from user $j$ in the $i$th cell to BS $k$. A
frequency-flat fading and the reciprocity between uplink and
downlink channels are assumed. Each element of
$\mathbf{H}_{k}^{[i,j]}$ is assumed to be an identical and
independent complex Gaussian random variable with zero mean and
variance $1/L$. User $j$ in the $i$th cell estimates the uplink
channel of its own link, $\mathbf{H}^{[i,j]}_k$ ($k=1, \ldots, K$),
via downlink pilot signaling transmitted from the BSs; that is,
local CSI is utilized as in \cite{K_Comadam08_GLOBECOM}.
%, and hence $\mathbf{H}_{k}^{[i,j]}$ remains constant for the time duration of a frame.
%Each selected user is assumed to transmit a single spatial stream. Because this restriction results in the worst case for the performance, we can easily expect that the achievability result for this worst case always applies to a case without this restriction.
Without loss of generality, the indices of the selected users in
each cell are assumed to be $(1, \ldots, S)$ for notational
simplicity. Then, the received signal at BS $i$ is expressed as:
\begin{align} \label{eq:y}
\mathbf{y}_i &= \sum_{j=1}^{S}\mathbf{H}_{i}^{[i,j]}
\mathbf{w}^{[i,j]}x^{[i,j]}  +\underbrace{\sum_{k=1, k\neq i}^{K}
\sum_{m=1}^{S}
\mathbf{H}_{i}^{[k,m]}\mathbf{w}^{[k,m]}x^{[k,m]}}_{\textrm{inter-cell
interference}} + \mathbf{z}_i,
\end{align}
where $\mathbf{w}^{[i,j]}\in \mathbb{C}^{L \times 1}$ and
$x^{[i,j]}$ are the weight vector and transmit symbol with unit
average power at user $j$ in the $i$th cell, respectively, and
$\mathbf{z}_i \in \mathbb{C}^{M \times 1}$ denotes the additive
white Gaussian noise at BS $i$, with zero mean and the covariance
$N_0\mathbf{I}_M$.

\begin{figure}
\begin{center}
  \includegraphics[width=0.46\textwidth]{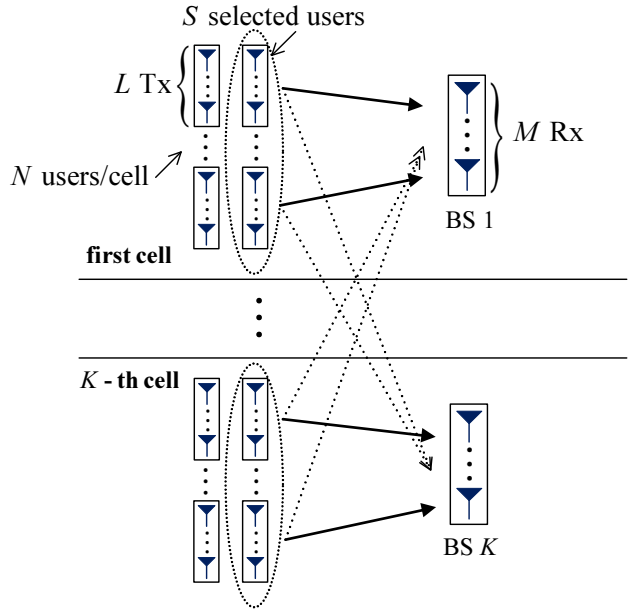}\\
  \caption{The MIMO IMAC model.}\label{fig:sys}
  \end{center}
\end{figure}

\section{Proposed Codebook-Based OIA: Overall Procedure} \label{SEC:CB_OIA}
%For completeness, we briefly review the achievability result of OIA for SIMO IMAC. Then,
%We propose three OIA techniques for MIMO IMAC
%and analyze their performance in terms of the achievable DoF and the associated
%user scaling.
%%As two extremes, the antenna selection- and SVD-based OIA schemes are proposed. Then, the codebook-based OIA is proposed.
%In addition, our schemes are compared with an ideal scenario where
%there are no inter-cell interferences, resulting in an upper bound
%on the performance.
%We first describe the overall procedure of the proposed OIA schemes for MIMO IMAC, and then present the detailed beamforming and scheduling methods later in this section.
% of the proposed beamforming-based OIA for MIMO IMAC.
The proposed scheme essentially follows the same procedure as that
of the SVD-based MIMO OIA \cite{H_Yang13_TWC, H_Yang12_ISIT} except
for the weight vector design step. For the completeness of our
achievability results, we briefly describe the overall procedure for
all the steps.

\subsection{Offline Procedure - Reference Basis Broadcasting}
The orthogonal reference basis matrix at BS $k$, to which the
received interference vectors are aligned, is denoted by
$\mathbf{Q}_k = \left[ \mathbf{q}_{k,1}, \ldots,
\mathbf{q}_{k,M-S}\right]\in \mathbb{C}^{M \times (M-S)}$. Here, BS
$k$ in the $k$th cell ($k\in  \{1, \ldots, K\}$) independently and
randomly generates $\mathbf{q}_{k,m}\in \mathbb{C}^{M \times 1}$ ($m
= 1, \ldots, M-S$) from the $M$-dimensional sphere. BS $k$ also
finds the null space of $\mathbf{Q}_k$, defined by $\mathbf{U}_k =
\left[ \mathbf{u}_{k,1}, \ldots, \mathbf{u}_{k,S} \right] \triangleq
\textrm{null}(\mathbf{Q}_k)$, where $\mathbf{u}_{k,i}\in
\mathbb{C}^{M \times 1}$ is orthonormal, and then broadcasts it to
all users.
%  A simple way to determine $\mathbf{Q}_k$ and $\mathbf{U}_k$ would be choosing $M-S$ columns of a left or right singular matrix of any $M \times M$ matrix as $\mathbf{Q}_k$ and choosing the rest of the $S$ columns as $\mathbf{U}_k$.
 %Note that if $S=M$, $\mathbf{U}_k$ can be any orthogonal matrix. In this case, no IA is conducted and only the opportunistic interference nulling (OIN) is performed as in the OIN mode in the SIMO IMAC \cite{B_Jung11_TC}.
Note that this process is required only once prior to data
transmission and does not need to change with respect to channel
instances.

%The interference space $\mathbf{Q}\in\mathbb{C}^{M \times (M-S)}$ and its null space, denoted by $\mathbf{U} = \left[ \mathbf{u}_{1}, \ldots, \mathbf{u}_{S}\right] \triangleq \textrm{null}\left(\mathbf{Q}\right)\in \mathbb{C}^{M \times S}$, are defined as in Section \label{SUBSEC:review}.

%
%
%
%First, the interference space for the interference alignment is calculated offline.
% If $S<M$, we  generate the interference spaces $\mathbf{Q} = \left[ \mathbf{q}_{1}, \ldots, \mathbf{q}_{M-S}\right]$ and compute the null space of $\mathbf{Q}$,  defined by $\mathbf{U} = \left[ \mathbf{u}_{1}, \ldots, \mathbf{u}_{S}\rightS] \triangleq \textrm{null}\left(\mathbf{Q}\right)$. Here, $\mathbf{u}_{m}$, $m=1, \ldots, S$, are orthonormal basis, i.e., $\|\mathbf{u}_{m}\|^2 = 1$ and ${\mathbf{u}_{m}}^{\textrm{H}}\mathbf{u}_{l} = 0$ if $m \neq l$. The simplest way to generate $\mathbf{Q}$ and $\mathbf{U}$ is choosing $M-S$ and $S$ columns of an orthogonal matrix that is generated using the singular value decomposition. Note that this calculation is required only once prior to the transmission and does not change for different channel instances.

\subsection{Step 1 - Weight Vector Design}
Let us denote the codebook set consisting of $N_f$ elements as
$\mathcal{C}_f \triangleq \left\{\mathbf{c}_1, \ldots,
\mathbf{c}_{N_f} \right\}$, where $\mathbf{c}_1, \ldots,
\mathbf{c}_{N_f}\in \mathbb{C}^{L \times 1}$ are chosen from the
$L$-dimensional unit sphere. Then, the number of bits to represent
$\mathcal{C}_f$ is denoted by $n_f = \lceil \log_2 N_f \rceil$. Let
$\mathbf{w}^{[i,j]}$ denote the weight vector at user $j$ in the
$i$th cell. Each user attempts to minimize the leakage of
interference (LIF) $\eta^{[i,j]}$ defined
by~\cite{B_Jung11_TC,H_Yang13_TWC}
\begin{align} \label{eq:LIF_VQ_def}
\eta^{[i,j]} = \sum_{k=1, k\neq i}^{K}
\left\|\mathbf{U}_k^{\textrm{H}}\mathbf{H}_{k}^{[i,j]}
\mathbf{w}^{[i,j]} \right\|^2 = \left\|\mathbf{G}^{[i,j]}
\mathbf{w}^{[i,j]} \right\|^2,
\end{align}
where $\mathbf{G}^{[i,j]}$ is the stacked interference channel
matrix given by
\begin{align} \label{eq:G_def}
\mathbf{G}^{[i,j]} &\triangleq \Big[
\left({\mathbf{U}_1}^{\textrm{H}}\mathbf{H}_{1}^{[i,j]}\right)^{\textrm{T}},
\ldots,
\left({\mathbf{U}_{i-1}}^{\textrm{H}}\mathbf{H}_{i-1}^{[i,j]}\right)^{\textrm{T}},
\left({\mathbf{U}_{i+1}}^{\textrm{H}}\mathbf{H}_{i+1}^{[i,j]}\right)^{\textrm{T}},
\ldots,
\left({\mathbf{U}_K}^{\textrm{H}}\mathbf{H}_{K}^{[i,j]}\right)^{\textrm{T}}
\Big]^{\textrm{T}}.
\end{align}
Let us denote the SVD of $\mathbf{G}^{[i,j]}$ by
$\mathbf{G}^{[i,j]}=\mathbf{E}^{[i,j]}\boldsymbol{\Sigma}^{[i,j]}{\mathbf{V}^{[i,j]}}^{\textrm{H}}$,
where $\mathbf{E}^{[i,j]}\in \mathbb{C}^{(K-1)S\times L}$ and
$\mathbf{V}^{[i,j]}\in \mathbb{C}^{L\times L}$ are left- and
right-singular vectors of $\mathbf{G}^{[i,j]}$, respectively,
consisting of $L$ orthonormal columns, and
$\boldsymbol{\Sigma}^{[i,j]} = \textrm{diag}\left(
\sigma^{[i,j]}_{1}, \ldots, \sigma^{[i,j]}_{L}\right)$. Here,
$\sigma^{[i,j]}_{m}$ denotes the $m$th singular value of
$\mathbf{G}^{[i,j]}$, where $\sigma^{[i,j]}_{1}\ge \cdots
\ge\sigma^{[i,j]}_{L}$. Then, $\mathbf{w}^{[i,j]}$ is obtained from
$\mathbf{w}^{[i,j]} = \arg \max_{1\le n \le N_f} \left|
\left(\mathbf{v}_{L}^{[i,j]}\right)^H \mathbf{c}_n\right|^2$.
Clearly, the weight vector minimizing $\eta^{[i,j]}$ is the $L$th
column of $\mathbf{V}^{[i,j]}$, denoted by $\mathbf{v}_{L}^{[i,j]}$,
and this precoding is subject to the SVD-based OIA.

\subsection{Step 2 - User Selection}
Each user reports its LIF metric in (\ref{eq:LIF_VQ_def}) to the
corresponding BS, and then each BS selects $S$ users, having the LIF
metrics up to the $S$th smallest one, amongst $N$ users in the cell.
Subsequently, each selected user forwards the index of
$\mathbf{w}^{[i,j]}$ in the codebook to the belonging BS.
%In fact, there exists a trade-off between the number of $N$ required to achieve the optimal DoF and the amount of this feedforward, which will be discussed in Section \ref{SEC:DISCUSS}.

\subsection{Step 3 - Uplink Transmission and Detection}
If all the selected users transmit the uplink signals
simultaneously, then the received signal at BS $i$ is given by
(\ref{eq:y}). As in the SVD-based OIA~\cite{H_Yang13_TWC}, the
linear ZF detection is sufficient to achieve the maximum DoF. The
decision statistics $\mathbf{r}_i$ at BS $i$ is obtained from
\begin{equation}\label{eq:r}
\mathbf{r}_i = \left[r_{i,1}, \ldots, r_{i,S}
\right]^{\textrm{T}}\triangleq
{\mathbf{F}_i}^{\textrm{H}}\mathbf{U}_i^{\textrm{H}}\mathbf{y}_i,
\end{equation}
where $\mathbf{F}_{i}\in \mathbb{C}^{S \times S}$ is the ZF
equalizer defined by
\begin{align}
\mathbf{F}_i &= \left[\mathbf{f}_{i,1}, \ldots, \mathbf{f}_{i,S}
\right]  \nonumber\\ & \triangleq
\left(\left[{\mathbf{U}_i}^{\textrm{H}}\mathbf{H}_{i}^{[i,1]}
\mathbf{w}^{[i,1]}, \ldots,
{\mathbf{U}_i}^{\textrm{H}}\mathbf{H}_{i}^{[i,S]}
\mathbf{w}^{[i,S]}\right]^{-1}\right)^{\textrm{H}}. \nonumber
\end{align}
%\begin{align} \label{eq:F_def}
%\mathbf{F}_i& = \left[\mathbf{f}_{i,1}, \ldots, \mathbf{f}_{i,S}
%\right]  \triangleq
%\left(\left[{\mathbf{U}_i}^{\textrm{H}}\mathbf{H}_{i}^{[i,1]}
%\mathbf{w}^{[i,1]}, \ldots,
%{\mathbf{U}_i}^{\textrm{H}}\mathbf{H}_{i}^{[i,S]}
%\mathbf{w}^{[i,S]}\right]^{-1}\right)^{\textrm{H}}.
%\end{align}
Note that multiplying ${\mathbf{U}_i}^{\textrm{H}}$ to
$\mathbf{y}_i$ cancels interference aligned at $\mathbf{Q}_i$. The
achievable rate $R^{[i,j]}$ is then given by
\begin{align} \label{eq:rate_general}
R^{[i,j]} &\!=\! \log_2\left( 1+ \textrm{SINR}^{[i,j]}\right)
\!=\!\log_2 \left( 1+
\frac{\textrm{SNR}}{\left\|\mathbf{f}_{i,j}\right\|^2\! +\!
\tilde{I}^{[i,j]}  } \right),
\end{align}
where $\textrm{SNR} = 1/N_0$ and $\tilde{I}^{[i,j]} \triangleq
\sum_{k=1, k\neq i}^{K} \sum_{m=1}^{S} \left|
{\mathbf{f}_{i,j}}^{\textrm{H}}
{\mathbf{U}_i}^{\textrm{H}}\mathbf{H}_{i}^{[k,m]}\mathbf{w}^{[k,m]}
\right|^2 \textrm{SNR}$.

\begin{figure}
\begin{center}
  \includegraphics[width=0.73\textwidth]{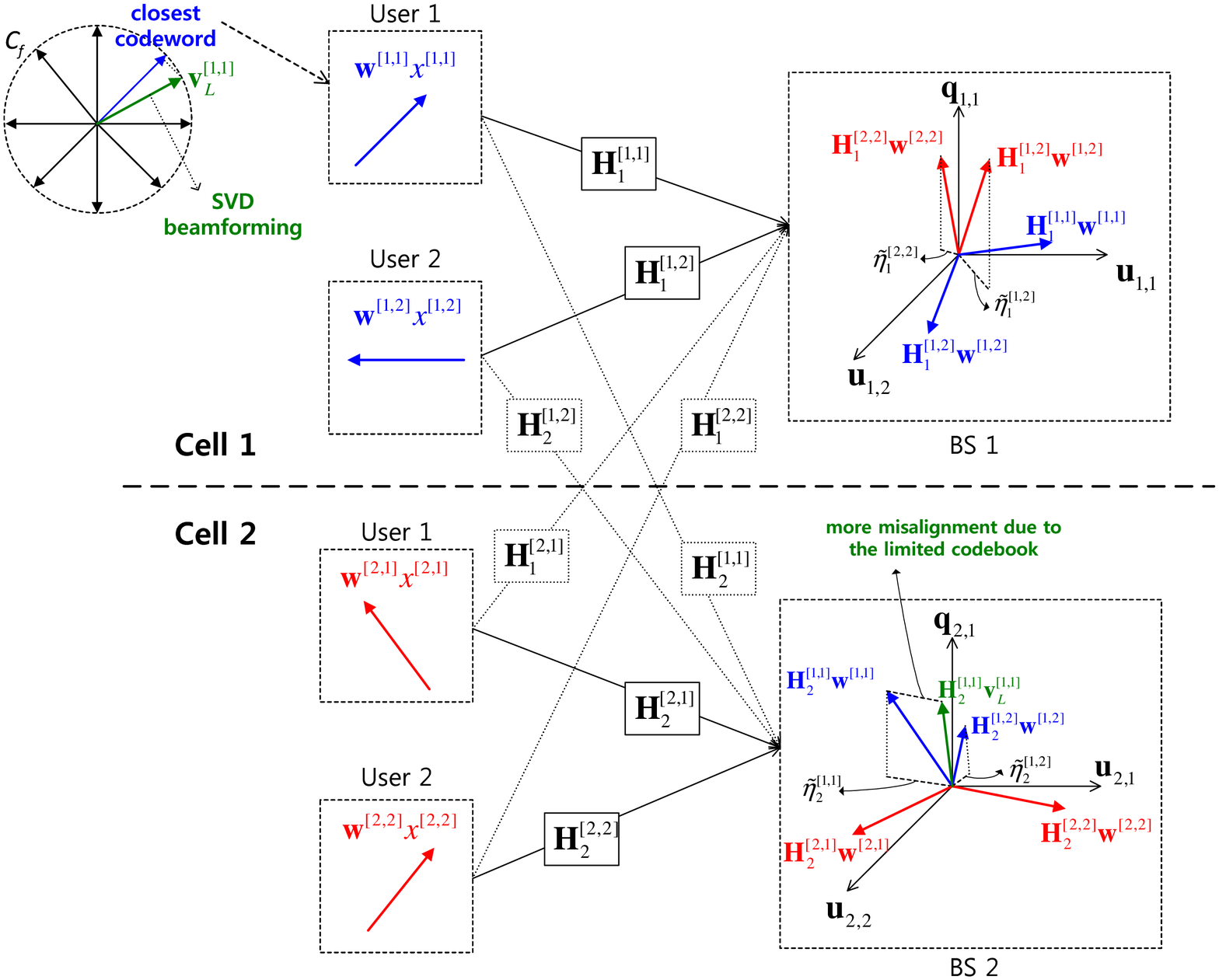}\\
  \caption{The codebook-based OIA where $K=2$, $M=3$, $S=2$.}\label{fig:OIA}
  \end{center}
\end{figure}

Figure \ref{fig:OIA} illustrates the principle of the proposed
signaling for $K=2$, $M=3$, and $S=2$. If interference
$\mathbf{H}_{i}^{[k,m]}\mathbf{w}^{[k,m]}$ in (\ref{eq:y}) is
perfectly aligned to the interference basis $\mathbf{Q}_i$, i.e.,
$\mathbf{H}_{i}^{[k,m]}\mathbf{w}^{[k,m]} \in
\textrm{span}(\mathbf{Q}_i)$, then interference in $\mathbf{r}_i$ of
(\ref{eq:r}) vanishes because
$\mathbf{U}_i^{\textrm{H}}\mathbf{H}_{i}^{[k,m]}\mathbf{w}^{[k,m]} =
\mathbf{0}$. As illustrated in Fig. \ref{fig:OIA}, the value
$\|\mathbf{U}_i^{\textrm{H}}\mathbf{H}_{i}^{[k,m]}
\mathbf{w}^{[k,m]} \|^2$ represents the amount of the signal
transmitted from user $m$ in the $k$th cell to BS $i$ that is not
aligned to the interference reference basis $\mathbf{Q}_i$. This
misalignment becomes higher than the SVD-based OIA case, due to the
finite codebook size.

%Before we introduce our main result, we introduce the following lemma on the quantization error.
%\begin{lemma}\cite[Corollary 3]{W_Dai08_TIT} \label{lemma:quantization_bound}
%Let us denote the maximum possible quantization error by
%\begin{equation}
%d_{\max} \triangleq \max_{\mathbf{v}\in \mathbb{C}^{L\times 1}, \|\mathbf{v}\|^2  = 1, \mathbf{c} \in \mathcal{C}_f} \sqrt{1-\left|{\mathbf{v}}^{\textrm{H}} \mathbf{c} \right|^2}.
%\end{equation}
%From the Hamming bound of the sphere packing \cite[Corollary 3]{W_Dai08_TIT}, we have for $d_{\max} \le 1$,
%\begin{equation}
%\left| \mathbf{C}_f \right| = 2^{N_f} \le
%\end{equation}
%\end{lemma}

%\begin{theorem}\label{theorem:LF}  %% Fractional DoF
%The codebook-based OIA with the LIF metric (\ref{eq:LIF_VQ_def}) and the optimal Grassmannian codebook $\mathcal{C}_f$ \cite{D_Love03_TIT} achieve with high probability
%\begin{equation}
%\textrm{DoF} \ge KS(1-\gamma),
%\end{equation}
%for any $0\le \gamma\le 1$ if
%\begin{align}\label{eq:N_f_cond}
%n_f \approx \gamma(L-1)\log_2 \left( \textrm{SNR}\right)\hspace{3pt} \textrm{ bits}
%\end{align}
%and
%\begin{align} \label{eq:N_scaling_LF}
%N &= \omega\left(\textrm{SNR}^{\gamma\left[(K-1)S -L+1\right]}\right).
%\end{align}
%The approximate equality (\ref{eq:N_f_cond}) becomes equality as SNR increases.
%\end{theorem}

\section{Achievability Results} \label{sec:achievability}
It was shown in~\cite{H_Yang13_TWC} that using the SVD-based OIA
scheme leads to a comparatively less number of users required to
achieve the maximum DoF in the MIMO IMAC model. In this section, we
derive the number of feedforward bits required to achieve the same
achievability as the SVD-based OIA case in terms of the DoF and user
scaling condition when two different types of codebook-based OIA
schemes, i.e., the Grassmannian and random codebook-based OIAs, are
used.

In our analysis, we use the total DoF defined
as~\cite{V_Cadambe08_TIT}
\begin{equation}
\textrm{DoF} = \lim_{\textrm{SNR} \rightarrow \infty}
\frac{\sum_{i=1}^{K}\sum_{j=1}^{S}R^{[i,j]}}{\log_2 \textrm{SNR}},
\nonumber
\end{equation}
where $R^{[i,j]}$ is the achievable rate for user $j$ in the $i$th
cell and $\textrm{SNR} = 1/N_0$.

\subsection{Grassmannian Codebook-Based OIA} \label{subsec:Grassmaniaan}
We start with the following three lemmas which shall be used to
establish our main theorem.

\begin{lemma} \label{lemma:eta_UB}
For any given codebook, the LIF metric $\eta^{[i,j]}$ in
(\ref{eq:LIF_VQ_def}) is upper-bounded by
\begin{equation}\label{eq:eta_lemma1}
\eta^{[i,j]} \le {\sigma^{[i,j]}_L}^2+   {d^{[i,j]}}^2
{\sigma_1^{[i,j]}}^2,
\end{equation}
where $d^{[i,j]}$ is the residual distance defined by
\begin{equation}\label{eq:d_ij}
d^{[i,j]} = \sqrt{1-\left|{\mathbf{w}^{[i,j]}}^{\textrm{H}}
\mathbf{v}_L^{[i,j]} \right|^2}
\end{equation}
and $\sigma^{[i,j]}_m$ is the $m$th singular value of the stacked
interference channel matrix in (\ref{eq:G_def}).
\end{lemma}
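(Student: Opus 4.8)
The plan is to bound the quantity $\eta^{[i,j]} = \|\mathbf{G}^{[i,j]}\mathbf{w}^{[i,j]}\|^2$ by decomposing the weight vector $\mathbf{w}^{[i,j]}$ into its component along the optimal direction $\mathbf{v}_L^{[i,j]}$ and a component orthogonal to it. Since $\mathbf{G}^{[i,j]} = \mathbf{E}^{[i,j]}\boldsymbol{\Sigma}^{[i,j]}{\mathbf{V}^{[i,j]}}^{\textrm{H}}$ and $\mathbf{E}^{[i,j]}$ has orthonormal columns, we have $\eta^{[i,j]} = \|\boldsymbol{\Sigma}^{[i,j]}{\mathbf{V}^{[i,j]}}^{\textrm{H}}\mathbf{w}^{[i,j]}\|^2$. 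Writing $\mathbf{a} \triangleq {\mathbf{V}^{[i,j]}}^{\textrm{H}}\mathbf{w}^{[i,j]}$, whose entries are $a_m = {\mathbf{v}_m^{[i,j]}}^{\textrm{H}}\mathbf{w}^{[i,j]}$, this becomes $\eta^{[i,j]} = \sum_{m=1}^{L} {\sigma_m^{[i,j]}}^2 |a_m|^2$. Because $\mathbf{V}^{[i,j]}$ is unitary and $\mathbf{w}^{[i,j]}$ is a unit vector, $\sum_{m=1}^L |a_m|^2 = 1$.

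The next step is to isolate the $L$th term from the rest. By the definition in \eqref{eq:d_ij}, $|a_L|^2 = |{\mathbf{w}^{[i,j]}}^{\textrm{H}}\mathbf{v}_L^{[i,j]}|^2 = 1 - {d^{[i,j]}}^2$, so the remaining mass satisfies $\sum_{m=1}^{L-1}|a_m|^2 = {d^{[i,j]}}^2$. I would then split the sum as
\begin{align}
\eta^{[i,j]} = {\sigma_L^{[i,j]}}^2 |a_L|^2 + \sum_{m=1}^{L-1}{\sigma_m^{[i,j]}}^2|a_m|^2. \nonumber
\end{align}
For the first term, since $|a_L|^2 \le 1$, we bound ${\sigma_L^{[i,j]}}^2 |a_L|^2 \le {\sigma_L^{[i,j]}}^2$. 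For the second term, using the singular value ordering $\sigma_1^{[i,j]} \ge \cdots \ge \sigma_L^{[i,j]}$, each $\sigma_m^{[i,j]} \le \sigma_1^{[i,j]}$ for $m \le L-1$, so $\sum_{m=1}^{L-1}{\sigma_m^{[i,j]}}^2|a_m|^2 \le {\sigma_1^{[i,j]}}^2 \sum_{m=1}^{L-1}|a_m|^2 = {d^{[i,j]}}^2 {\sigma_1^{[i,j]}}^2$. Adding the two bounds yields exactly \eqref{eq:eta_lemma1}.

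I do not anticipate a serious obstacle here, since the argument is essentially a clean consequence of unitarity and the monotonicity of the singular values. The only point requiring mild care is the bookkeeping of the first term: one could alternatively write $|a_L|^2 = 1 - {d^{[i,j]}}^2$ and keep a tighter bound, but since $\sigma_L^{[i,j]}$ is the smallest singular value the slack introduced by replacing $|a_L|^2$ with $1$ is harmless and gives the cleaner stated form. The essential content is that the residual distance $d^{[i,j]}$ controls how much of the weight vector leaks into the large-singular-value subspace, and the bound cleanly separates the irreducible floor ${\sigma_L^{[i,j]}}^2$ (achieved by the exact SVD solution) from the quantization penalty ${d^{[i,j]}}^2{\sigma_1^{[i,j]}}^2$ caused by the finite codebook.
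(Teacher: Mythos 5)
Your proof is correct and takes essentially the same route as the paper: the paper writes $\mathbf{w}^{[i,j]} = \sqrt{1-{d^{[i,j]}}^2}\,\mathbf{v}_L^{[i,j]} + d^{[i,j]}\mathbf{t}^{[i,j]}$ with $\mathbf{t}^{[i,j]}$ a unit vector orthogonal to $\mathbf{v}_L^{[i,j]}$, bounds the aligned part by ${\sigma_L^{[i,j]}}^2$ and the residual by $\|\mathbf{G}^{[i,j]}\mathbf{t}^{[i,j]}\|^2 \le {\sigma_1^{[i,j]}}^2$, which is precisely your singular-basis expansion with the $L-1$ orthogonal coordinates $a_1,\dots,a_{L-1}$ bundled into the single vector $\mathbf{t}^{[i,j]}$. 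Your coordinate version additionally makes explicit that the paper's first inequality is in fact an equality (the images $\mathbf{G}^{[i,j]}\mathbf{v}_L^{[i,j]}$ and $\mathbf{G}^{[i,j]}\mathbf{t}^{[i,j]}$ are orthogonal in the left singular basis) and that the bound is purely deterministic, needing none of the distributional remarks the paper attaches to $\mathbf{v}_L^{[i,j]}$ and $\mathbf{t}^{[i,j]}$.
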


\begin{proof}
Since $\mathbf{v}_L^{[i,j]}$ is isotropically distributed over the
$L$-dimensional sphere with identically and isotropically
distributed (i.i.d.)  complex Gaussian channel matrices
\cite{D_Love03_TIT}, the weight vector $\mathbf{w}^{[i,j]}$ chosen
from a codebook can be written by $\mathbf{w}^{[i,j]} =
\sqrt{1-{d^{[i,j]}}^2}\mathbf{v}_L^{[i,j]}+d^{[i,j]}
\mathbf{t}^{[i,j]}$~\cite{W_Dai08_TIT, N_Jindal06_TIT}, where $0\le
{d^{[i,j]}}^2\le1$ accounts for the quantization error and
$\mathbf{t}^{[i,j]}$ is a unit-norm vector i.i.d. over
$\textrm{null}\left(\mathbf{v}_L^{[i,j]}\right)$.\pagebreak[0] Then,
$\eta^{[i,j]}$ in (\ref{eq:LIF_VQ_def}) is bounded by
\begin{align}
\eta^{[i,j]} &= \left\|\sqrt{1-{d^{[i,j]}}^2}\mathbf{G}^{[i,j]}\mathbf{v}^{[i,j]}_L + {d^{[i,j]}} \mathbf{G}^{[i,j]}\mathbf{t}^{[i,j]} \right\|^2 \nonumber\\
 &  \le (1-{d^{[i,j]}}^2){\sigma^{[i,j]}_L}^2 + {d^{[i,j]}}^2\left\| \mathbf{G}^{[i,j]}\mathbf{t}^{[i,j]} \right\|^2 \nonumber\\
\label{eq:eta_UB4} & \le {\sigma^{[i,j]}_L}^2+   {d^{[i,j]}}^2
{\sigma_1^{[i,j]}}^2,
\end{align}
where (\ref{eq:eta_UB4}) follows from $\left\|
\mathbf{G}^{[i,j]}\mathbf{t}^{[i,j]} \right\|^2  \le
{\sigma_1^{[i,j]}}^2$ for any unit-norm vector $\mathbf{t}^{[i,j]}$,
which proves the lemma.
\end{proof}

Now, we further bound the LIF metric for the Grassmannian codebook
as follows.

\begin{lemma}\label{lemma:eta_UB_Grass}
By using the Grassmannian codebook, $\eta^{[i,j]}$ is further
bounded by $\eta^{[i,j]} \le {\sigma^{[i,j]}_L}^2+   \nu_f
{\sigma_1^{[i,j]}}^2$, where $\nu_f$ denotes the number of
feedforward bits given by
\begin{equation} \label{eq:dmin_bound2}
\nu_f =  \left( \frac{1}{N_f}\right)^{1/(L-1)}
\end{equation}
and $N_f$ is the number of elements in the codebook set.
\end{lemma}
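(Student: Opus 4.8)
The plan is to build directly on Lemma~\ref{lemma:eta_UB}, which already gives $\eta^{[i,j]} \le {\sigma_L^{[i,j]}}^2 + {d^{[i,j]}}^2 {\sigma_1^{[i,j]}}^2$ for \emph{any} codebook. The whole problem therefore collapses to controlling the residual distance for the Grassmannian codebook: it is enough to prove the deterministic bound ${d^{[i,j]}}^2 \le \nu_f = \left(1/N_f\right)^{1/(L-1)}$, since substituting this into Lemma~\ref{lemma:eta_UB} instantly produces the claimed inequality $\eta^{[i,j]} \le {\sigma_L^{[i,j]}}^2 + \nu_f {\sigma_1^{[i,j]}}^2$.

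To bound ${d^{[i,j]}}^2$, I would first reinterpret the selection rule. Because $\mathbf{w}^{[i,j]} = \arg\max_n \left|\mathbf{c}_n^{\textrm{H}} \mathbf{v}_L^{[i,j]}\right|^2$ picks the codeword most closely aligned with the target direction $\mathbf{v}_L^{[i,j]}$, the definition (\ref{eq:d_ij}) makes ${d^{[i,j]}}^2 = \min_n \left(1 - \left|\mathbf{c}_n^{\textrm{H}} \mathbf{v}_L^{[i,j]}\right|^2\right)$ exactly the squared chordal distance from $\mathbf{v}_L^{[i,j]}$ to its nearest codeword on the Grassmann manifold of lines in $\mathbb{C}^L$, i.e.\ $\mathbb{CP}^{L-1}$, whose real dimension is $2(L-1)$. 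The worst case of this quantity over all possible target directions is, by definition, the covering radius of the codebook.

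The geometric ingredient I would then invoke is the volume of a metric ball of chordal radius $\delta$ on this manifold, which scales as $\delta^{2(L-1)}$. Since the $N_f$ balls of radius equal to the covering radius $\rho$, centered at the codewords, must cover the unit-normalized manifold, a ball-covering count forces $N_f\,\rho^{2(L-1)} \gtrsim 1$; for the Grassmannian codebook, which realizes a near-optimal line packing, this counting is essentially tight, giving ${d^{[i,j]}}^2 \le \rho^2 \le \left(1/N_f\right)^{1/(L-1)} = \nu_f$. This is precisely the standard Grassmannian quantization bound, and rather than re-deriving it I would cite the known Grassmann-manifold ball-volume and quantization results.

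I expect the delicate point to be exactly this last step, namely obtaining the upper bound on $\rho^2$ with the correct constant. The volume count by itself yields a matching \emph{lower} bound on the covering radius, so the upper bound $\rho^2 \le \left(1/N_f\right)^{1/(L-1)}$ rests on the near-optimality of the packing-based Grassmannian construction and on the sharp quantization bounds established for it; a greedy (Gilbert--Varshamov-type) covering argument gives the same order up to a constant. Once this deterministic bound on ${d^{[i,j]}}^2$ is secured, the conclusion follows by a single substitution into Lemma~\ref{lemma:eta_UB}.
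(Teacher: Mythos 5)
Your proposal follows essentially the same route as the paper's own proof: the paper likewise reduces the claim, via Lemma~\ref{lemma:eta_UB}, to the quantization bound ${d^{[i,j]}}^2 \le (1/N_f)^{1/(L-1)}$, which it imports from the cited packing literature \cite{A_Barg02_TIT, W_Dai08_TIT} as the minimum of the Rankin, Gilbert--Varshamov, and Hamming bounds in (\ref{eq:dmin_bound}), notes that the third term dominates for large $N_f$, and substitutes into (\ref{eq:eta_lemma1}). Your covering-radius and ball-volume discussion is simply a sketch of where that cited bound originates, and your candid caveat about securing the constant in the upper bound on $\rho^2$ reflects exactly the same reliance on the references that the paper itself makes at that step.
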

\begin{proof}
The Grassmannian codebook $\mathcal{C}_f$ is the set of codewords
chosen by  the optimal sphere packing for the $L$-dimensional
sphere; namely, the chordal distance of any two codewords is all the
same, i.e., $\sqrt{1-\left|{\mathbf{c}_i}^{\textrm{H}} \mathbf{c}_j
\right|^2}=d$ for any $i\neq j$ and $d\ge0$. Based on this property,
the Rankin, Gilbert-Varshamov, and Hamming bounds on the distance of
the codebook give us \cite{A_Barg02_TIT, W_Dai08_TIT}
\begin{equation} \label{eq:dmin_bound}
{d^{[i,j]}}^2 \le  \min \left\{\frac{1}{2},
\frac{(L-1)N_f}{2L(N_f-1)}, \left(
\frac{1}{N_f}\right)^{1/(L-1)}\right\}.
\end{equation}
For large $N_f$, the third term of (\ref{eq:dmin_bound}) becomes
dominant, thus providing an arbitrarily tight bound. Inserting
(\ref{eq:dmin_bound}) to (\ref{eq:eta_lemma1}) proves the lemma.
\end{proof}

From Lemma \ref{lemma:eta_UB_Grass}, we also have the following
lemma.
\begin{lemma} \label{lemma:eta_UB2}
For the Grassmannian codebook, it follows that $\eta^{[i,j]} \le
\eta^{[i,j]}_{\textrm{GC}}$ where
\begin{equation} \label{eq:eta_tilde}
\eta^{[i,j]}_{\textrm{GC}} = \left\{ \begin{array}{cc}
                         C_1\nu_f{\sigma_1^{[i,j]}}^2 & \textrm{ if }  {\sigma_L^{[i,j]}}^2 \le (1+\delta)\nu_f{\sigma_1^{[i,j]}}^2\\
                         C_2{\sigma_L^{[i,j]}}^2 & \textrm{ otherwise},
                       \end{array}
\right.
\end{equation}
for any constant $\delta \ge 0$ independent of SNR. Here, $C_1 =
(2+\delta)$ and $C_2 = (1+1/(1+\delta))$; thus, $1 \le C_2 \le 2$.
\end{lemma}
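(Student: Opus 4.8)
The plan is to start from the bound already established in Lemma~\ref{lemma:eta_UB_Grass}, namely $\eta^{[i,j]} \le {\sigma_L^{[i,j]}}^2 + \nu_f {\sigma_1^{[i,j]}}^2$, and to dispose of whichever of the two summands is the smaller by absorbing it into the larger one. The threshold ${\sigma_L^{[i,j]}}^2 \le (1+\delta)\nu_f {\sigma_1^{[i,j]}}^2$ appearing in the statement is exactly the line at which the two summands change their relative order, up to the slack factor $1+\delta$, so a simple dichotomy on this condition should reproduce the claimed piecewise bound with the stated constants $C_1$ and $C_2$.

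First I would treat the regime ${\sigma_L^{[i,j]}}^2 \le (1+\delta)\nu_f {\sigma_1^{[i,j]}}^2$. Here I substitute the assumed upper bound on ${\sigma_L^{[i,j]}}^2$ directly into the Lemma~\ref{lemma:eta_UB_Grass} estimate, obtaining $\eta^{[i,j]} \le (1+\delta)\nu_f {\sigma_1^{[i,j]}}^2 + \nu_f {\sigma_1^{[i,j]}}^2 = (2+\delta)\nu_f {\sigma_1^{[i,j]}}^2$, which is precisely $C_1 \nu_f {\sigma_1^{[i,j]}}^2$ with $C_1 = 2+\delta$.

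Next I would treat the complementary regime ${\sigma_L^{[i,j]}}^2 > (1+\delta)\nu_f {\sigma_1^{[i,j]}}^2$, which I rewrite as $\nu_f {\sigma_1^{[i,j]}}^2 < {\sigma_L^{[i,j]}}^2/(1+\delta)$, and substitute into the same estimate to get $\eta^{[i,j]} \le {\sigma_L^{[i,j]}}^2 + {\sigma_L^{[i,j]}}^2/(1+\delta) = (1+1/(1+\delta)){\sigma_L^{[i,j]}}^2 = C_2 {\sigma_L^{[i,j]}}^2$. The auxiliary claim $1 \le C_2 \le 2$ then follows by inspecting $C_2 = 1+1/(1+\delta)$ as $\delta$ ranges over $[0,\infty)$, since the fraction $1/(1+\delta)$ lies in $(0,1]$. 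Because every step is an elementary substitution into an already-proven inequality, I do not anticipate a genuine obstacle; the only point requiring care is keeping the direction of the inequality consistent when the defining condition is transposed into a bound on $\nu_f {\sigma_1^{[i,j]}}^2$ in the second case.
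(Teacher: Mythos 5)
Your proposal is correct and is essentially the argument the paper intends: the paper states the lemma with only the remark ``From Lemma~\ref{lemma:eta_UB_Grass}, we also have the following lemma,'' and your dichotomy on ${\sigma_L^{[i,j]}}^2 \lessgtr (1+\delta)\nu_f{\sigma_1^{[i,j]}}^2$, substituting into $\eta^{[i,j]} \le {\sigma_L^{[i,j]}}^2 + \nu_f{\sigma_1^{[i,j]}}^2$, is exactly the elementary case analysis that fills in that omitted step, yielding $C_1 = 2+\delta$ and $C_2 = 1+1/(1+\delta)$ as claimed.
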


Now we are ready to show our first main theorem, which derives the
number of feedforward bits, required to achieve the same user
scaling condition as the SVD-based OIA case, for the proposed OIA
with Grassmannian codebook.

\begin{theorem}\label{theorem:CB}
The codebook-based OIA with the optimal Grassmannian codebook
$\mathcal{C}_f$ \cite{D_Love03_TIT} achieves the user scaling
condition of the SVD-based OIA if~\footnote{We use the following
notation: i) $f(x)=O(g(x))$ means that there exist constants $M$ and
$m$ such that $f(x)\le Mg(x)$ for all $x>m$. ii) $f(x)=o(g(x))$
means that $\underset{x\rightarrow\infty}\lim\frac{f(x)}{g(x)}=0$.
iii) $f(x)=\Omega(g(x))$ if $g(x)=O(f(x))$. iv) $f(x)=\omega(g(x))$
if $g(x)=o(f(x))$.}
\begin{align}\label{eq:N_f_cond}
n_f =\Omega\left(L \log_2 \textrm{SNR}\right) \textrm{ bits}.
\end{align} Moreover, under the condition (\ref{eq:N_f_cond}),
$KS$ DoF are achievable with high probability if $N =
\omega\left(\textrm{SNR}^{(K-1)S -L+1}\right)$.
%where $f(x) = \omega(g(x))$ implies that $\lim_{x \rightarrow
%\infty} \frac{g(x)}{f(x)}=0$.
\end{theorem}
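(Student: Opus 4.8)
The plan is to establish the two claims in Theorem~\ref{theorem:CB} separately. For the codebook size condition, I would start from the refined bound in Lemma~\ref{lemma:eta_UB2}, which expresses $\eta^{[i,j]}_{\textrm{GC}}$ as a two-regime quantity governed by whether ${\sigma_L^{[i,j]}}^2$ is dominated by $\nu_f {\sigma_1^{[i,j]}}^2$ or not. The key observation is that for the codebook-based scheme to match the SVD-based OIA, the quantization penalty $\nu_f {\sigma_1^{[i,j]}}^2$ must not asymptotically dominate the intrinsic term ${\sigma_L^{[i,j]}}^2$. Since the DoF requirement drives the target LIF level to scale like $\textrm{SNR}^{-1}$ (so that the residual interference does not overwhelm the desired signal power in $\textrm{SINR}^{[i,j]}$ of~(\ref{eq:rate_general})), and since ${\sigma_1^{[i,j]}}^2 = O(1)$ with high probability for the fixed-dimension Gaussian matrix $\mathbf{G}^{[i,j]}$, I would require $\nu_f = O\!\left(\textrm{SNR}^{-1}\right)$. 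Recalling from Lemma~\ref{lemma:eta_UB_Grass} that $\nu_f = (1/N_f)^{1/(L-1)}$, inverting this relation gives $N_f = \Omega\!\left(\textrm{SNR}^{L-1}\right)$, hence $n_f = \lceil \log_2 N_f \rceil = \Omega\!\left((L-1)\log_2 \textrm{SNR}\right) = \Omega\!\left(L \log_2 \textrm{SNR}\right)$, which is~(\ref{eq:N_f_cond}).

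For the DoF claim, I would argue that under~(\ref{eq:N_f_cond}) the codebook-based LIF metric is, up to a constant factor, no worse than the SVD-based one. Concretely, in the first regime of~(\ref{eq:eta_tilde}) the bound is $C_1 \nu_f {\sigma_1^{[i,j]}}^2$, and with $\nu_f = O(\textrm{SNR}^{-1})$ this is of the same order as the SVD-based LIF target; in the second regime it is $C_2 {\sigma_L^{[i,j]}}^2$ with $1 \le C_2 \le 2$, i.e., within a bounded multiple of the SVD-based value ${\sigma_L^{[i,j]}}^2$. Thus the order statistics of $\eta^{[i,j]}$ over the $N$ users in each cell behave, up to constants, exactly as in the SVD-based analysis of~\cite{H_Yang13_TWC}. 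The user-selection step picks the $S$ users with the smallest LIF metrics, so I would invoke the same extreme-value/order-statistic argument: the probability that the $S$th smallest LIF metric exceeds the threshold needed for a vanishing interference-to-noise ratio decays to zero provided $N$ grows fast enough. Carrying the constant factors $C_1, C_2$ through this argument changes only the implied constants, not the scaling exponent, yielding the condition $N = \omega\!\left(\textrm{SNR}^{(K-1)S - L + 1}\right)$.

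The main obstacle I anticipate is the second regime of Lemma~\ref{lemma:eta_UB2}, where the quantization error is not negligible but the bound $C_2 {\sigma_L^{[i,j]}}^2$ still carries a constant possibly as large as $2$. I must verify that inflating the effective LIF by this bounded factor does not alter the tail probability of the $S$th order statistic at the exponent level. This reduces to showing that the cumulative distribution function of $\eta^{[i,j]}$ near zero scales polynomially with the same exponent whether one uses threshold $\epsilon$ or $C_2 \epsilon$, so that the union bound over the $N$ users and over the residual interference terms $\tilde{I}^{[i,j]}$ in~(\ref{eq:rate_general}) still closes with the stated scaling. Establishing this small-ball behavior of $\eta^{[i,j]}$ for the finite-dimensional matrix $\mathbf{G}^{[i,j]}$, and confirming that the event $\{{\sigma_1^{[i,j]}}^2 = O(1)\}$ holds with probability approaching one, is where the technical care is concentrated; the rest follows by mirroring the SVD-based proof with constants adjusted.
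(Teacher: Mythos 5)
Your overall architecture is the same as the paper's (overestimate the LIF via Lemma~\ref{lemma:eta_UB2}, then argue that the constants $C_1,C_2$ cannot move the polynomial exponent $\psi=(K-1)S-L+1$ of the small-ball probability of ${\sigma_L^{[i,j]}}^2$, which the paper takes from \cite[Theorem 4]{S_Jin08_TC}), but there is a genuine quantitative gap in your first step: you require only $\nu_f = O(\textrm{SNR}^{-1})$. At that boundary the quantization term obeys $\nu_f{\sigma_1^{[i,j]}}^2 \le \epsilon\,\textrm{SNR}^{-1}/(KS^2)$ only on the event $\{{\sigma_1^{[i,j]}}^2 \le \textrm{const}\}$, whose probability is a fixed constant strictly less than one (the largest singular value of the Gaussian matrix $\mathbf{G}^{[i,j]}$ has unbounded support), and this event must hold simultaneously across cells; hence $\mathcal{P}$ in (\ref{eq:P_def_CB}) cannot be driven to one and the lower bound $KS\cdot\mathcal{P}$ does not deliver $KS$ DoF. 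Your appeal to ``${\sigma_1^{[i,j]}}^2=O(1)$ w.h.p.'' does not rescue this, because the high probability is with respect to growing SNR while the required threshold is a non-growing constant. The paper avoids the problem by building in strict slack: it chooses $\nu_f \le \textrm{SNR}^{-(1+\gamma)}$ with $\gamma>0$, i.e., $n_f \ge (1+\gamma)(L-1)\log_2\textrm{SNR}$, so that violating the threshold forces ${\sigma_1^{[i,j]}}^2 = \Omega\left(\textrm{SNR}^{\gamma}\right)\rightarrow\infty$ and the offending probability ($p_o^{\prime}$ in the paper) vanishes. This slack is also, in effect, why the theorem can be stated with coefficient $L$ rather than $L-1$.

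The second gap is your handling of the two regimes of (\ref{eq:eta_tilde}). Declaring the first-regime bound ``of the same order as the SVD-based target'' does not close the argument; the paper instead conditions on the event that the second regime holds for \emph{every one of the $N$ candidate users per cell} (the event $p_c$ in (\ref{eq:P_prime_LB1_00})), and proving $\lim_{\textrm{SNR}\rightarrow\infty}p_c=1$ requires the condition-number tail bound of Lemma~\ref{lemma:CDF_cond} together with a union bound over all $KN$ users. This is exactly where the user scaling and the codebook slack interact: the union bound closes only for $N=O\left(\textrm{SNR}^{(1+\beta)((K-1)S-L+1)}\right)$ with $\beta<\gamma$, so the admissible growth of $N$ is coupled to how much faster than $(L-1)\log_2\textrm{SNR}$ the feedforward budget $n_f$ grows — a coupling your sketch never surfaces. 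Note also that the regime classification must be controlled for all $N$ candidates, not just the $S$ selected users, since selection is by the true LIF and the overestimate must be valid uniformly. Finally, the theorem asserts achievability for every $N=\omega\left(\textrm{SNR}^{(K-1)S-L+1}\right)$, whereas the core argument only covers polynomially constrained $N$; the paper closes this with a separate monotonicity observation (increasing $N$ for fixed $n_f$ only reduces the selected users' LIFs), which your proposal would also need. By contrast, the obstacle you flagged as the hard part — that inflating the LIF by the bounded factor $C_2$ does not change the small-ball exponent — is actually the easy step and is handled exactly as you anticipate via \cite{S_Jin08_TC}.
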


%The rest of this subsection is dedicated to the proof of Theorem \ref{theorem:CB}.

\begin{proof}
Let us start from showing the following simple bound on
$\textrm{SINR}^{[i,j]}$ in (\ref{eq:rate_general}):
\begin{equation} \label{eq:SINR_LB}
\textrm{SINR}^{[i,j]} \ge
\frac{\textrm{SNR}/\left\|\mathbf{f}_{i,j}\right\|^2}{1 +
I^{[i,j]}},
\end{equation}
where $I^{[i,j]} \triangleq \sum_{k=1, k\neq i}^{K} \sum_{m=1}^{S}
\left\|
{\mathbf{U}_i}^{\textrm{H}}\mathbf{H}_{i}^{[k,m]}\mathbf{w}^{[k,m]}
\right\|^2 \textrm{SNR}$.
%our goal is to bound the interference term in the denominator of (\ref{eq:SINR_LB}) to obtain non-zero DoF for each user.
Suppose that $I^{[i,j]} \le \epsilon$ for some constant $\epsilon>0$
independent of the received SNR so that each user achieves 1 DoF. By
this principle, we obtain a lower bound on the achievable DoF for
the codebook-based OIA as $\textrm{DoF} \ge KS \cdot \mathcal{P}$,
where
\begin{align} \label{eq:P_def_CB}
\mathcal{P} \triangleq &\lim_{\textrm{SNR}\rightarrow
\infty}\textrm{Pr} \biggl\{I^{[i,j]}\le \epsilon, \forall
\textrm{user $j$ and BS $i$},  j\in
\mathcal{S}\triangleq\{1,\dots,S\}, i \in
\mathcal{K}\triangleq\{1,\dots,K\} \biggr\}.
\end{align}
From the fact that the sum of received interference at all the BSs
is equivalent to the sum of the LIF metrics~\cite{B_Jung11_TC,
H_Yang13_TWC}, i.e.,
 \begin{align} \label{eq:LIF_equivalence}
& \sum_{i=1}^{K}\sum_{k=1, k\neq i}^{K} \sum_{m=1}^{S}
\left\|{\mathbf{U}_{i}}^{\textrm{H}}\mathbf{H}_{i}^{[k,m]}\mathbf{w}^{[i,j]}\right\|^2
=\sum_{i=1}^{K}\sum_{j=1}^{S} \eta^{[i,j]},
 \end{align}
we can bound $\mathcal{P}$ as
\begin{align}
\mathcal{P} &\ge \lim_{\textrm{SNR}\rightarrow \infty} \textrm{Pr} \bigg\{\sum_{i=1}^{K}\sum_{j=1}^{S}I^{[i,j]}\le \epsilon\bigg\}\nonumber \\
% &= \lim_{\textrm{SNR}\rightarrow \infty} \textrm{Pr} \left\{\sum_{j=1}^{S}\sum_{i=1}^{K}\sum_{j=1}^{S}\eta^{[i,j]} \textrm{SNR}\le \epsilon\right\} \label{eq:P_CB_LB1}\\
 \label{eq:P_CB_LB2} &\ge \lim_{\textrm{SNR}\rightarrow \infty} \textrm{Pr} \left\{\eta^{[i,j]} \le \frac{\epsilon\textrm{SNR}^{-1}}{KS^2}, \forall i\in \mathcal{K}, \forall j\in\mathcal{S}\right\}.
% \\ & \ge \lim_{\textrm{SNR}\rightarrow \infty} \left(\textrm{Pr}\left\{ L^{[1,1]}\le \frac{\epsilon\textrm{SNR}^{-1}}{KS^2}\right\}\right)^{KS^2} \label{eq:P_AS_LB2}
\end{align}
%where (\ref{eq:P_CB_LB1}) follows from the fact that the the sum of the LIF metrics of the selected users is equivalent to the sum of the interferences received at all the BSs, i.e.,
At this point, let us choose $N_f$ such that $N_f^{1/(L-1)} \ge
\textrm{SNR}^{(1+\gamma)}$ for $\gamma>0$, i.e.,
\begin{equation}\label{eq:n_f_choose}
n_f = \log_2 N_f \ge (1+\gamma)(L-1)\log_2 \textrm{SNR},
\end{equation}
resulting in (\ref{eq:N_f_cond}).
%Since our interest is on the large number of $N_f$, we only take the last term of (\ref{eq:dmin_bound}) to get
From (\ref{eq:dmin_bound2}) and (\ref{eq:n_f_choose}), $\nu_f$ is
bounded by
\begin{equation} \label{eq:N_f_inequality}
\nu_f = N_f^{-1/(L-1)} \le \textrm{SNR}^{-(1+\gamma)}.
\end{equation}
%for $\gamma>0$, where the first inequality of (\ref{eq:N_f_inequality}) follows from the Gilbert-Varshamov bound of sphere packing \cite{W_Dai08_TIT}, and the second inequality follows from the aforementioned choice of $N_f$.
Now we consider the LIF-overestimating modification by using the
upper bound $\eta^{[i,j]}_{\textrm{GC}}$ in Lemma
\ref{lemma:eta_UB2}. From $ \eta^{[i,j]} \le
\eta^{[i,j]}_{\textrm{GC}}$ and (\ref{eq:P_CB_LB2}), we have
%Since this modified case with overestimated LIF only degrades the performance, the probability $\mathcal{P}^{\prime}$ for this case to achieve the same result of , $I^{[i,j]}\le \epsilon$ for all users, is lower than $\mathcal{P}$.
%Therefore, we have
\begin{align}
\mathcal{P} &\ge \lim_{\textrm{SNR}\rightarrow \infty}\textrm{Pr} \left\{ \eta^{[i,j]}_{\textrm{GC}}\le \frac{\epsilon \textrm{SNR}^{-1}}{KS^2}, \forall i\in \mathcal{K}, \forall j \in \mathcal{S}\right\}.\label{eq:P_prime_LB0}\\
& \ge  \mathcal{P}_{\textrm{GC}} \triangleq \lim_{\textrm{SNR}\rightarrow \infty}\textrm{Pr} \bigg\{ \left[\eta^{[i,j]}_{\textrm{GC}}\le \frac{\epsilon \textrm{SNR}^{-1}}{KS^2}, \forall i\in \mathcal{K}, \forall j \in \mathcal{S}\right] \nonumber \\
\label{eq:P_prime_LB1}& \& \left[{\sigma_L^{[i,j]}}^2 \!\ge\!
(1+\delta)\nu_f {\sigma_1^{[i,j]}}^2, \forall i \in \mathcal{K},
\forall j\in \mathcal{N}\triangleq\{1,\dots,N\}\right]\bigg\}.
\end{align}
%where $\mathcal{P}_{\textrm{GC}}$ in (\ref{eq:P_prime_LB1}) represents the probability that the LIF metric is lower bounded by $\eta^{[i,j]} \le \eta^{[i,j]}_{\textrm{GC}} = C_2 {\sigma_L^{[i,j]}}^2$ for all the users and that there exist at least $S$ users having $\eta^{[i,j]}_{\textrm{GC}}\le \frac{\epsilon \textrm{SNR}^{-1}}{KS^2}$ per every cell.
%(\ref{eq:P_prime_LB1}) follows from the fact that adding the condition only lowers the probability.
From the principle $\textrm{Pr}(\mathcal{A}\cap \mathcal{B}) =
\textrm{Pr}(\mathcal{B}) - \textrm{Pr}(\mathcal{A}^c\cap
\mathcal{B})$ for sets $\mathcal{A}$ and $\mathcal{B}$,
(\ref{eq:P_prime_LB1}) can be rewritten as
\begin{align}
\pagebreak[0]& \mathcal{P}_{\textrm{GC}} = \lim_{\textrm{SNR}\rightarrow \infty}  \underbrace{\textrm{Pr}\left\{ {\sigma_L^{[i,j]}}^2 \ge (1+\delta)\nu_f {\sigma_1^{[i,j]}}^2, \forall i\in\mathcal{K}, \forall j\in\mathcal{N} \right\}}_{\triangleq p_c} \label{eq:P_prime_LB1_00} \pagebreak[0] \\
& - \lim_{\textrm{SNR}\rightarrow \infty}\textrm{Pr} \bigg\{ \biggl[\textrm{there exist less than $S$ users per cell such that } \eta^{[i,j]}_{\textrm{GC}}\!\le\! \frac{\epsilon \textrm{SNR}^{-1}}{KS^2}\biggr] \nonumber\\ & ~~~~~\& \left[{\sigma_L^{[i,j]}}^2 \!\ge\! (1+\delta)\nu_f {\sigma_1^{[i,j]}}^2, \forall i \in \mathcal{K}, \forall j\in \mathcal{N}\right]\bigg\} \linebreak[0] \nonumber \\
& = \lim_{\textrm{SNR}\rightarrow \infty} p_c
-\lim_{\textrm{SNR}\rightarrow \infty}  \sum_{m=0}^{S-1} \left(
\begin{array}{c}
                                      N \\
                                      m
                                    \end{array} \right) \bigg[\textrm{Pr} \left\{ \eta^{[i,j]}_{\textrm{GC}} \le \frac{\epsilon \textrm{SNR}^{-1}}{KS^2} \hspace{3pt}\& \hspace{5pt}(1+\delta)\nu_f{\sigma_1^{[i,j]}}^2 \le {\sigma_L^{[i,j]}}^2 \right\} \bigg]^{m} \nonumber \displaybreak[0]\\
\label{eq:P_prime_LB1_2}& \times\bigg[\underbrace{\textrm{Pr} \left\{ (1+\delta)\nu_f{\sigma_1^{[i,j]}}^2 \le {\sigma_L^{[i,j]}}^2 ~ \& ~ \frac{\epsilon \textrm{SNR}^{-1}}{C_2 KS^2}\le {\sigma_L^{[i,j]}}^2 \right\}}_{\triangleq P_{o}}\bigg]^{N-m}\displaybreak[0]\\
\label{eq:P_prime_LB2}& \ge \lim_{\textrm{SNR}\rightarrow \infty}
p_c - \lim_{\textrm{SNR}\rightarrow \infty}  \sum_{m=0}^{S-1} N^m
{P_o}^{N-m},
\end{align}
where (\ref{eq:P_prime_LB1_2}) follows from the fact that the
statistics of each user is independent of each other, and
(\ref{eq:P_prime_LB2}) follows from $\textrm{Pr} \left\{
\eta^{[i,j]}_{\textrm{GC}} \le \frac{\epsilon
\textrm{SNR}^{-1}}{KS^2} \hspace{3pt}\&
\hspace{5pt}(1+\delta)\nu_f{\sigma_1^{[i,j]}}^2 \le
{\sigma_L^{[i,j]}}^2 \right\}  \le 1$  and $\bigg( \begin{array}{c}
                                      N \\
                                      i
                                    \end{array} \bigg) = \frac{N!}{i!(N-i)!}\le N^i$.

For the rest of the proof, we show that (\ref{eq:P_prime_LB2}) tends
to one under certain conditions. In Appendix A, we first show that
for given $\gamma>0$ and $\delta>0$, it follows that
\begin{eqnarray}
\lim_{\textrm{SNR}\rightarrow \infty} p_c =1, \textrm{ if $N = O
\left(\textrm{SNR}^{(1+\beta)((K-1)S-L+1)}\right)$}
\label{eq:p_c_asympt} \textrm{~~where $\beta< \gamma$.} \nonumber
\end{eqnarray}
Now we show that the second term of (\ref{eq:P_prime_LB2}) tends to
zero as the SNR increases. From the fact that $\textrm{Pr}\{[A \le
B] \& [C \le B] \} = \textrm{Pr}\{[A \le B] | A\ge C
\}\textrm{Pr}\{A\ge C\} +\textrm{Pr}\{[C \le B] | A< C
\}\textrm{Pr}\{A< C\}$ for random variables $A$, $B$, and $C$, the
probability $P_o$ can be written as
\begin{align}
P_o =& \textrm{Pr} \biggl\{ (1+\delta)\nu_f{\sigma_1^{[i,j]}}^2 \le {\sigma_L^{[i,j]}}^2  \Big| (1+\delta)\nu_f{\sigma_1^{[i,j]}}^2 \ge \frac{\epsilon \textrm{SNR}^{-1}}{C_2 KS^2}\biggr\} \cdot p_o^{\prime} \nonumber \\
& +  \textrm{Pr} \left\{ \frac{\epsilon \textrm{SNR}^{-1}}{C_2 KS^2}
\le {\sigma_L^{[i,j]}}^2  \Big| (1+\delta)\nu_f{\sigma_1^{[i,j]}}^2
< \frac{\epsilon \textrm{SNR}^{-1}}{C_2 KS^2}\right\} \cdot
(1-p_o^{\prime}) \nonumber
\end{align}
where $p_o^{\prime} = \textrm{Pr} \left\{
(1+\delta)\nu_f{\sigma_1^{[i,j]}}^2 \ge \frac{\epsilon
\textrm{SNR}^{-1}}{C_2 KS^2}\right\}$. From
(\ref{eq:N_f_inequality}), for any given channel instance, we have
\begin{align}
\lim_{\textrm{SNR}\rightarrow \infty} p_o^{\prime} & \le \lim_{\textrm{SNR}\rightarrow \infty} \textrm{Pr}\left\{ (1+\delta)\textrm{SNR}^{-(1+\gamma)}{\sigma_1^{[i,j]}}^2\ge \frac{\epsilon \textrm{SNR}^{-1}}{C_2KS^2}\right\} \nonumber\\
& = \lim_{\textrm{SNR}\rightarrow \infty} \textrm{Pr}\left\{
{\sigma_1^{[i,j]}}^2\ge
\frac{\epsilon\textrm{SNR}^{\gamma}}{(1+\delta) C_2KS^2}\right\}=0
\nonumber,
\end{align}
which results in
\begin{align}
\lim_{\textrm{SNR}\rightarrow \infty} P_o &=
\lim_{\textrm{SNR}\rightarrow \infty} \textrm{Pr} \left\{
\frac{\epsilon\textrm{SNR}^{-1}}{C_2KS^2}\le {\sigma_L^{[i,j]}}^2
\right\}. \nonumber
\end{align}
From \cite[Theorem 4]{S_Jin08_TC}, we have
\begin{align} \label{eq:Po_last}
&\textrm{Pr} \left\{ \frac{\epsilon \textrm{SNR}^{-1}}{C_2KS^2}\le
{\sigma_L^{[i,j]}}^2 \right\} \nonumber\\&= 1 - \alpha\left(
\frac{\epsilon }{C_2KS^2}\right)^{\psi} \textrm{SNR}^{-\psi}+
o\left( \textrm{SNR}^{-\psi}\right),
\end{align}
where $\psi \triangleq (K-1)S-L+1$ and $\alpha>0$ is a constant
determined by $K$, $S$, and $L$. Applying (\ref{eq:Po_last}) to
(\ref{eq:P_prime_LB2}) yields
\begin{align}
\label{eq:P_prime_LB_final}&\mathcal{P}\ge \mathcal{P}_{\textrm{GC}}
\ge \lim_{\textrm{SNR}\rightarrow \infty}  p_c -
\lim_{\textrm{SNR}\rightarrow \infty} \sum_{i=0}^{S-1} N^m \Biggl[1
- \alpha\left( \frac{\epsilon }{C_2KS^2}\right)^{\psi}
\textrm{SNR}^{-\psi} \hspace{15pt} + o\left(
\textrm{SNR}^{-\psi}\right)\Biggr]^{N-m}.
\end{align}
For given $\gamma$ and $0< \beta <\gamma$, let us choose
$N=O\left(\textrm{SNR}^{(1+\beta)((K-1)S-L+1)}\right)$. Then, from
(\ref{eq:p_c_asympt}), it follows that $
\lim_{\textrm{SNR}\rightarrow \infty} p_c =1$. On the other hand,
the second term of (\ref{eq:P_prime_LB_final}) tends to zero because
$N^m$ increases polynomially with SNR for given $m$ while $\left[1 -
\alpha\left( \frac{\epsilon }{C_2KS^2}\right)^{\psi}
\textrm{SNR}^{-\psi}+ o\left(
\textrm{SNR}^{-\psi}\right)\right]^{N-m}$ decreases exponentially
with SNR. Thus, $\mathcal{P}$ tends to one, which means that $KS$
DoF are achievable.

%Note that we have the limit on $\beta$ as $\beta <\gamma$, because
%we considered the LIF-overestimating methodology and the special
%case in which all the users satisfy the condition
%$(1+\delta)\nu_f{\sigma_1^{[i,j]}}^2 \le {\sigma_L^{[i,j]}}^2$ as
%assumed in (\ref{eq:P_prime_LB1}).
As assumed earlier, note that our analysis holds for $\beta<\gamma$.
However, it is obvious that assuming either the condition $N=
O\left(\textrm{SNR}^{(1+\beta)((K-1)S-L+1)}\right)$ for any
$\beta\ge \gamma$ or $N=
\omega\left(\textrm{SNR}^{(K-1)S-L+1}\right)$ leads to the same or
higher DoF compared to the case for $0\le \beta <\gamma$, due to the
fact that increasing $N$ for given $n_f$ and SNR values yields a
reduced LIF and thus an increased achievable rate for all the
selected users. Since the maximum achievable DoF are upper-bounded
by $KS$ for given $S$, the last argument indicates that $KS$ DoF are
achievable if $n_f=\Omega(L\log_2\textrm{SNR})$ and
$N=\omega\left(\textrm{SNR}^{(K-1)S-L+1}\right)$, which completes
the proof.
\end{proof}

Theorem \ref{theorem:CB} indicates that $n_f$ should scale with SNR
so as to achieve the target DoF under the same user scaling
condition as the SVD-based OIA case, and that from
(\ref{eq:n_f_choose}), no more feedforward bits than
$(1+\gamma)(L-1) \log_2 \textrm{SNR}$ are indeed required. The
derived $n_f$ scaling condition is proportional to $L$ and
$\log_2\textrm{SNR}$, which is consistent with the previous results
on the number of feedback bits required to avoid performance loss
due to the finite codebook size in a variety of limited feedback
systems \cite{D_Love03_TIT,N_Jindal06_TIT,J_Thukral09_ISIT}.

\subsection{Random Codebook-Based OIA}
For a random codebook scenario, each element $\mathbf{c}_n$ of
$\mathcal{C}_f$ ($n\in \{1,\ldots, N_f\}$) is chosen independently
and isotropically from the $L$-dimensional sphere. The following
second main theorem shows that the same user scaling condition as
the Grassmannian codebook-based OIA case is obtained even with the
random codebook-based OIA.

\begin{theorem} \label{theorem:RVQ}
The codebook-based OIA with a random codebook achieves $KS$ DoF with
high probability if $N = \omega\left( \textrm{SNR}^{(K-1)S-L+1}
\right)$ and $n_f =\Omega\left(L\log_2 \textrm{SNR}\right)$ bits.
\end{theorem}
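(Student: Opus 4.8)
The plan is to reduce Theorem \ref{theorem:RVQ} to the Grassmannian analysis of Theorem \ref{theorem:CB}. The only structural difference between the two codebooks enters through the residual distance $d^{[i,j]}$ in (\ref{eq:d_ij}): for the Grassmannian codebook it obeys the deterministic sphere-packing bound ${d^{[i,j]}}^2 \le \nu_f$ used in Lemma \ref{lemma:eta_UB_Grass}, whereas for a random codebook ${d^{[i,j]}}^2$ is itself a random variable. I would therefore first establish that, under the same codebook-size scaling $n_f = \Omega(L\log_2\textrm{SNR})$, the random quantization error is, with probability approaching one, no larger than a deterministic threshold of the same order as the Grassmannian one. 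Conditioned on this high-probability event the bound of Lemma \ref{lemma:eta_UB} collapses to exactly the form of Lemma \ref{lemma:eta_UB_Grass}, so that the two-case upper bound $\eta^{[i,j]}_{\textrm{GC}}$ of Lemma \ref{lemma:eta_UB2} applies verbatim, after which the entire argument of Theorem \ref{theorem:CB} can be replayed.

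The key input is the distribution of ${d^{[i,j]}}^2$. Because the codebook is drawn independently of the channels and $\mathbf{v}_L^{[i,j]}$ is isotropic on the $L$-dimensional sphere, for a single codeword $|\mathbf{c}_n^{\textrm{H}}\mathbf{v}_L^{[i,j]}|^2$ is $\textrm{Beta}(1,L-1)$-distributed, so that $\textrm{Pr}\{1-|\mathbf{c}_n^{\textrm{H}}\mathbf{v}_L^{[i,j]}|^2 \le x\} = x^{L-1}$ for $0 \le x \le 1$. Since ${d^{[i,j]}}^2$ is the minimum of these $N_f$ independent quantities, $\textrm{Pr}\{{d^{[i,j]}}^2 > x\} = (1-x^{L-1})^{N_f}$. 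I would stress the fact, used repeatedly below, that this distribution is independent of the singular values $\sigma_1^{[i,j]}$ and $\sigma_L^{[i,j]}$, because the codebook is independent of $\mathbf{G}^{[i,j]}$.

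Next I would fix $0 < \gamma' < \gamma$, set the threshold $\nu_f' = \textrm{SNR}^{-(1+\gamma')}$, and choose $N_f$ as in (\ref{eq:n_f_choose}), i.e. $n_f = (1+\gamma)(L-1)\log_2\textrm{SNR} = \Omega(L\log_2\textrm{SNR})$. Define the good event $\mathcal{E} = \{{d^{[i,j]}}^2 \le \nu_f',\ \forall i \in \mathcal{K},\ \forall j \in \mathcal{N}\}$. A union bound over the $KN$ users together with $(1-x^{L-1})^{N_f} \le \exp(-N_f x^{L-1})$ gives $\textrm{Pr}\{\mathcal{E}^c\} \le KN\exp(-N_f(\nu_f')^{L-1})$, and by construction $N_f(\nu_f')^{L-1} = \textrm{SNR}^{(\gamma-\gamma')(L-1)} \to \infty$. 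Since $N$ is polynomial in SNR, the exponential decay dominates and $\textrm{Pr}\{\mathcal{E}^c\} \to 0$. On $\mathcal{E}$, Lemma \ref{lemma:eta_UB} yields $\eta^{[i,j]} \le {\sigma_L^{[i,j]}}^2 + \nu_f'{\sigma_1^{[i,j]}}^2$ for every user, which is identical to the conclusion of Lemma \ref{lemma:eta_UB_Grass} with $\nu_f$ replaced by $\nu_f'$.

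With $\mathcal{E}$ in hand I would finish by replaying the derivation (\ref{eq:P_CB_LB2})--(\ref{eq:P_prime_LB_final}). Writing $\mathcal{P} \ge \textrm{Pr}\{\mathcal{E}\}\cdot\textrm{Pr}\{I^{[i,j]} \le \epsilon,\ \forall i,j \mid \mathcal{E}\}$, the second factor is lower-bounded exactly as in Theorem \ref{theorem:CB}: the LIF-equivalence (\ref{eq:LIF_equivalence}) and the two-case bound $\eta^{[i,j]}_{\textrm{GC}}$ reduce everything to the quantities $p_c$ and $P_o$, both of which are functions of the singular-value statistics alone and are therefore unchanged by conditioning on $\mathcal{E}$, precisely because ${d^{[i,j]}}^2$ is independent of $(\sigma_1^{[i,j]},\sigma_L^{[i,j]})$. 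Consequently $p_c \to 1$ for $N = O(\textrm{SNR}^{(1+\beta)((K-1)S-L+1)})$ with $\beta < \gamma$, the $\sum_{m=0}^{S-1} N^m {P_o}^{N-m}$ term vanishes, and both $\textrm{Pr}\{\mathcal{E}\}$ and the conditional probability tend to one, giving $\mathcal{P} \to 1$ and hence $KS$ DoF; the same monotonicity-in-$N$ argument as in Theorem \ref{theorem:CB} then extends the conclusion to all $N = \omega(\textrm{SNR}^{(K-1)S-L+1})$. I expect the main obstacle to be the bookkeeping around this conditioning: one must check that conditioning on $\mathcal{E}$ does not distort the singular-value distribution entering $p_c$ and $P_o$, which is exactly where the independence of ${d^{[i,j]}}^2$ from the singular values is indispensable, and that the union bound over all $KN$ users survives because the per-user failure probability decays faster than any polynomial in SNR while $N$ grows only polynomially.
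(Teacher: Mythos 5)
Your proposal is correct and follows essentially the same route as the paper's proof: the paper likewise derives the CDF $\textrm{Pr}\{{d^{[i,j]}}^2\le z\}=1-(1-z^{L-1})^{N_f}$, conditions on the high-probability event that ${d^{[i,j]}}^2\le \textrm{SNR}^{-(1+\gamma)}$ for all users (your $\mathcal{E}$ is its event $\mathcal{D}$), shows its complement vanishes because the exponential decay in $N_f$ beats the polynomial growth of $N$, and then replays the Grassmannian argument via the two-case bound $\eta^{[i,j]}_{\textrm{RC}}$, finishing with the same monotonicity extension to $N=\omega(\textrm{SNR}^{(K-1)S-L+1})$. Your only deviations are cosmetic refinements---placing the slack in the threshold ($\gamma'<\gamma$) rather than in the scaling of $N_f$, and using a union bound plus the explicit independence of ${d^{[i,j]}}^2$ from the singular values, which makes rigorous a step the paper leaves implicit.
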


\begin{proof}
Since equations (\ref{eq:SINR_LB})--(\ref{eq:P_CB_LB2}) also hold
for the random codebook approach, we only show that $\mathcal{P}$ in
(\ref{eq:P_def_CB}) tends to one under two conditions $N =
\omega\left( \textrm{SNR}^{(K-1)S-L+1} \right)$ and $n_f
=\Omega\left(L\log_2 \textrm{SNR}\right)$.
%We start with a slight abuse of notations rewriting Lemma \ref{lemma:eta_UB} as
%\begin{equation} \label{eq:eta_RC}
%\eta^{[i,j]}\le {\sigma^{[i,j]}_L}^2 +   {d^{[i,j]}}^2{\sigma_1^{[i,j]}}^2.
%\end{equation}
Unlike the Grassmannian codebook, the residual distance $d^{[i,j]}$
in (\ref{eq:d_ij}) is now a random variable and thus is unbounded.
Note that the cumulative density function (CDF) of the squared
chordal distance between any two independent unit random vectors
chosen isotropically from the $L$-dimensional sphere is given by
$\beta(L-1,1)$, where $\beta(x,y) =
\int_{0}^{1}t^{x-1}(1-t)^{y-1}dt$ is the beta function
\cite{N_Jindal06_TIT}. Since ${d^{[i,j]}}^2$ for the random codebook
is the minimum of $N_f$ independent random variables with
distribution  $\beta(L-1,1)$, the CDF of ${d^{[i,j]}}^2$ is given by
\begin{equation} \label{eq:F_d_def}
 \textrm{Pr}\left\{{d^{[i,j]}}^2\le z\right\} =  1-\left(1-z^{L-1}\right)^{N_f}.
\end{equation}
%
%At this point, let us define
%\begin{align} \label{eq:p_d_def}
%p_d \triangleq \textrm{Pr} \left\{ {d^{[i,j]}}^2 \le \textrm{SNR}^{-(1+\gamma)}, \forall i\in \mathcal{K},  j\in\mathcal{N}\right\}.
%\end{align}
Now, let us again consider the following modification for given
channel instance:
\begin{itemize}
 \item[i)] if ${d^{[i,j]}}^2 \le \textrm{SNR}^{-(1+\gamma)}$  for all $i$ and $j$, then the same LIF-overestimating modification as the Grassmannian codebook case is used, where the LIF values are replaced with their upper bounds.
 Specifically, from Lemma \ref{lemma:eta_UB}, we shall use the following upper bound on $\eta^{[i,j]}$:
\begin{equation} \label{eq:eta_tilde2}
\eta^{[i,j]}_{\textrm{RC}}\!\! = \!\!\left\{ \begin{array}{cc}
                         \!\!\!C_3{d^{[i,j]}}^2{\sigma_1^{[i,j]}}^2 & \textrm{if~}  {\sigma_L^{[i,j]}}^2 \!\!\le\!\! (1+\delta^{\prime}){d^{[i,j]}}^2{\sigma_1^{[i,j]}}^2\\
                         \!\!\!C_4{\sigma_L^{[i,j]}}^2 & \textrm{otherwise},
                       \end{array}
\right.
\end{equation}
for any constant $\delta^{\prime} \ge 0$ independent of SNR, where
$C_3 = (2+\delta^{\prime})$ and $C_4 = (1+1/(1+\delta^{\prime}))$;
thus, $1 \le C_4 \le 2$,
\item[ii)] otherwise, i.e., if ${d^{[i,j]}}^2 > \textrm{SNR}^{-(1+\gamma)}$ for any $i$ or $j$, then we drop the case by assuming 0 DoF for this case.
\end{itemize}
Let us define the event $\mathcal{D}$ as
\begin{align}
\mathcal{D}& = \bigl\{ {d^{[i,j]}}^2 \le \textrm{SNR}^{-(1+\gamma)},
\forall i\in \mathcal{K}=\{1,\dots,K\}, \forall
j\in\mathcal{N}=\{1,\dots,N\}\bigr\}. \nonumber
 \end{align}
 From $\eta^{[i,j]} \le \eta^{[i,j]}_{\textrm{RC}}$, we have
\begin{align}
\mathcal{P} &\ge \lim_{\textrm{SNR}\rightarrow \infty} \textrm{Pr} \biggl\{\eta^{[i,j]} \le \frac{\epsilon\textrm{SNR}^{-1}}{KS^2}, \forall i\in \mathcal{K},\forall j\in\mathcal{S}=\{1,\dots,S\}\biggr\} \nonumber\\
&\ge \mathcal{P}_{\textrm{RC}}
\triangleq\lim_{\textrm{SNR}\rightarrow \infty} \textrm{Pr} \left\{
\mathcal{D}\right\} \cdot\textrm{Pr} \left\{
\eta^{[i,j]}_{\textrm{RC}}\le \frac{\epsilon
\textrm{SNR}^{-1}}{KS^2}, \forall i\in \mathcal{K}, \forall j \in
\mathcal{S}\bigg| \mathcal{D}\right\}. \nonumber
%& \ge  \lim_{\textrm{SNR}\rightarrow \infty}p_d \cdot\textrm{Pr} \bigg\{ \left[\eta^{[i,j]}_{\textrm{RC}}\le \frac{\epsilon \textrm{SNR}^{-1}}{KS^2}, \forall i\in \mathcal{K}, \forall j \in \mathcal{S}\right] \nonumber \\
%\label{eq:P_prime_LB1}& \hspace{90pt}\& \left[{\sigma_L^{[i,j]}}^2 \ge \nu_f {\sigma_1^{[i,j]}}^2, \forall i \in \mathcal{K}, \forall j\in \mathcal{N}\right]\bigg| \mathcal{D}\bigg\},
\end{align}
From (\ref{eq:F_d_def}) and the inequality $(1-x)^y> 1-xy$ for any
$0<x<1<y$, we have
\begin{align}
\textrm{Pr} \left\{ \mathcal{D}\right\}& = \left( 1- \left( 1-\left(\textrm{SNR}^{-(1+\gamma)}\right)^{L-1} \right)^{N_f}\right)^{KN} \nonumber\\
\label{eq:p_d_LB}& > 1- KN \left(
1-\textrm{SNR}^{-(1+\gamma)(L-1)}\right)^{N_f}.
\end{align}
Let us choose $N$ such that $N$ scales polynomially with SNR. If
$N_f$ scales faster than $\textrm{SNR}^{(1+\gamma)(L-1)}$, then the
second term of (\ref{eq:p_d_LB}) vanishes as the SNR increases,
because $\left( 1-\textrm{SNR}^{-(1+\gamma)(L-1)}\right)^{N_f} $
decreases exponentially with SNR while $N$ increases polynomially
with SNR.

Now recall that for the Grassmannian codebook approach,
${d^{[i,j]}}^2$ is bounded by ${d^{[i,j]}}^2 \le \nu_f \le
\textrm{SNR}^{-(1+\gamma)}$ along with the choice of $n_f \ge
(1+\gamma)(L-1)\log_2 \textrm{SNR}$, and that our achievability
proof is based on the upper bound on the LIF metric in
(\ref{eq:eta_tilde}). If $\mathcal{D}$ holds, then the upper bound
in (\ref{eq:eta_tilde2}) is identical to (\ref{eq:eta_tilde}), and
thus it is not difficult to show that if $N=O\left(
\textrm{SNR}^{(1+\beta)((K-1)S-L+1)}\right)$ for any
$0<\beta<\gamma$, then
\begin{equation}\label{eq:eta_RC_tend1}
\lim_{\textrm{SNR}\rightarrow \infty}\textrm{Pr} \left\{
\eta^{[i,j]}_{\textrm{RC}}\le \frac{\epsilon
\textrm{SNR}^{-1}}{KS^2}, \forall i\in \mathcal{K}, \forall j \in
\mathcal{S}\bigg| \mathcal{D}\right\} = 1,
 \end{equation}
as shown in (\ref{eq:P_prime_LB0})--(\ref{eq:P_prime_LB_final}).
From (\ref{eq:p_d_LB}) and (\ref{eq:eta_RC_tend1}), choosing the two
conditions $n_f =\log_2 N_f \ge (1+\gamma)(L-1)\log_2 \left(
\textrm{SNR}\right)$, i.e., $n_f=\Omega(L\log_2 ( \textrm{SNR}))$,
and $N=O\left( \textrm{SNR}^{(1+\beta)((K-1)S-L+1)}\right)$ for any
$0<\beta<\gamma$, the probability $\mathcal{P}_{\textrm{RC}}$ tends
to one for increasing SNR. Note that taking the limit of $N$
polynomially increasing with SNR comes merely from the strict
condition of $\mathcal{D}$. Since increasing $N$ for given $n_f$
lowers the LIF and thereby increases the achievable rate for each
selected user, $\mathcal{P}$ tends to one for any
$N=\omega\left(\textrm{SNR}^{(K-1)S-L+1}\right)$, which completes
the proof.
%The converse proof is omitted as it is analogous to that of Theorem \ref{theorem:CB}.
\end{proof}

Interestingly, Theorem \ref{theorem:RVQ} indicates that the required
$n_f$ for the random codebook is the same as that for the
Grassmannian codebook. This is an encouraging result since
analytical construction methods of the Grassmannian codebook for
large $n_f$ have been unknown, and even its numerical construction
requires excessive computational complexity.
%With the use of the
%random codebook, the same asymptotic performance can be expected for
%high SNR regime.
We complete the achievability discussion by providing the following
remarks.

\begin{remark}[Random vs. Grassmannian codebook]
In the previous work on limited feedback systems, the performance
analysis has focused on the average SNR or the average rate loss
\cite{C_Au-Yeung09_TWC}. It has been known that the Grassmannian
codebook outperforms the random codebook in the average sense.
However, in our OIA framework, the focus is on the asymptotic
performance for increasing SNR, and it turns out that the asymptotic
behavior is the same for the two codebook approaches. In fact, our
result is consistent with the previous work on limited feedback
systems (see~\cite{B_Khoshnevis11_Thesis}), where the performance
gap between two codebooks was shown to be negligible as the number
of feedback bits increases.
\end{remark}

\begin{remark}[Comparison to the MIMO broadcasting channel]
For the MIMO broadcasting channel with limited feedback, where the
transmitter has $L$ antennas employing the random codebook, it was
shown in~\cite{N_Jindal06_TIT} that the achievable rate loss for
each user, denoted by $\Delta R$, coming from the finite size of the
codebook is given by $\Delta < \log_2 \left(1+\textrm{SNR} \cdot
2^{-n_f/(L-1)} \right)$. Thus, to achieve the maximum DoF for each
user, or to make the rate loss negligible as the SNR increases, the
term $\textrm{SNR} \cdot 2^{-n_f/(L-1)}$ should remain constant for
increasing SNR. That is, $n_f$ should scale faster than $(L-1)\log_2
\textrm{SNR}$. Although the system model and signaling methodology
under consideration are different from our setting, Theorems
\ref{theorem:CB} and \ref{theorem:RVQ} are consistent with this
previous result.
\end{remark}

\section{Asymptotically Optimal Receiver Design at the BSs}\label{sec:acvanced_Rx}
While using the ZF receiver is sufficient to achieve the maximum
DoF, we study the design of an enhanced receiver at the BSs in
pursuit of improving the achievable rate. Recall that each BS is not
assumed to have CSI of the cross-links from the users in the other
cells, because no coordination between BSs is assumed. In this
section, the main challenge is thus to decode the desired symbols
with no CSI of the cross-links at the receivers. For convenience,
let us rewrite the received signal at BS $i$ in (\ref{eq:y}) as
\begin{align}\label{eq:y_org_re}
\mathbf{y}_i &=  \tilde{\mathbf{H}}_i^{(c)}\mathbf{x}_i + \sum_{k=1,
k\neq i}^{K} \sum_{m=1}^{S}
\mathbf{H}_{i}^{[k,m]}\mathbf{w}^{[k,m]}x^{[k,m]} + \mathbf{z}_i,
\end{align}
where $\tilde{\mathbf{H}}_i^{(c)} \triangleq \left[
\mathbf{H}_i^{[i,1]}\mathbf{w}^{[i,1]}, \ldots
\mathbf{H}_i^{[i,S]}\mathbf{w}^{[i,S]}\right] \in \mathbb{C}^{M
\times S}$ and $\mathbf{x}_i \triangleq \left[ x^{[i,1]}, \ldots,
x^{[i,S]}\right]^{\textrm{T}}\in \mathbb{C}^{S \times 1}$. The
channel capacity $I_{\textrm{C}}$ is given by \cite{R_Blum03_JSAC}
\begin{equation} \label{eq:I_CAP}
I_{\textrm{C}} = \log_2 \det \left( \mathbf{R}_c^{-1/2}
\tilde{\mathbf{H}}_i^{(c)}\left(\tilde{\mathbf{H}}_i^{(c)}\right)^{\textrm{H}}\mathbf{R}_c^{-1/2}+
\mathbf{I}_M\right),
\end{equation}
where
\begin{equation} \label{eq:Rc}
\mathbf{R}_c = \sum_{k=1, k\neq i}^{K} \sum_{m=1}^{S}
\mathbf{H}_{i}^{[k,m]}\mathbf{w}^{[k,m]}
\left(\mathbf{H}_{i}^{[k,m]}\mathbf{w}^{[k,m]}\right)^{\textrm{H}} +
N_0 \mathbf{I}_M,
\end{equation}
which is not available at BS $i$ due to the assumption of unknown
inter-cell interfering links. The channel capacity $I_{\textrm{C}}$
is achievable with the optimal ML decoder
\begin{equation}\label{x_hat_ML}
\hat{\mathbf{x}}_i^{\textrm{ML}} = \underset{\mathbf{x}}{\arg \min}
\left( \mathbf{y}_i - \tilde{\mathbf{H}}_i^{(c)} \mathbf{x}
\right)^{\textrm{H}} \mathbf{R}_{c}^{-1} \left( \mathbf{y}_i -
\tilde{\mathbf{H}}_i^{(c)} \mathbf{x}\right),
 \end{equation}
which is infeasible to implement due to unknown $\mathbf{R}_c$.
After nulling interference by multiplying $\mathbf{U}_i$, the
received signal is given by
\begin{align}\label{eq:y_2}
\tilde{\mathbf{y}}_i &= \mathbf{U}_i^{\textrm{H}}\mathbf{y}_i =
\tilde{\mathbf{H}}_i\mathbf{x}_i + \underbrace{\sum_{k=1, k\neq
i}^{K} \sum_{m=1}^{S}
\mathbf{U}_i^{\textrm{H}}\mathbf{H}_{i}^{[k,m]}\mathbf{w}^{[k,m]}x^{[k,m]}
+ \mathbf{U}_i^{\textrm{H}}\mathbf{z}_i}_{\tilde{\mathbf{z}}_i},
\end{align}
where $\tilde{\mathbf{H}}_i \triangleq \left[
\mathbf{U}_i^{\textrm{H}}\mathbf{H}_i^{[i,1]}\mathbf{w}^{[i,1]},
\ldots
\mathbf{U}_i^{\textrm{H}}\mathbf{H}_i^{[i,S]}\mathbf{w}^{[i,S]}\right]
\in \mathbb{C}^{S \times S}$ and $\tilde{\mathbf{z}}_i \in
\mathbb{C}^{S \times 1}$ represents the effective noise.
%Recall that the effective channel matrix of the desired signal, $\tilde{\mathbf{H}}_i$, is known to BS $i$ through the limited feedback from the selected users in the serving cell.
%Therefore, more realistic scenario is to perform the ML decoding at each BS treating the intercell interference as noise. To this end, the covariance matrix of the effective noise including the intercell interference should be known at the BSs;  namely, from (\ref{eq:y}), BS $i$ should have the notion of the covariance matrix of the effective noise $\tilde{\mathbf{z}}_i$ to perform the ML decoding.
Let us denote the covariance matrix of the effective noise after
interference nulling by
\begin{align} \label{eq:R_def}
\mathbf{R} &\triangleq E
\left\{\tilde{\mathbf{z}_i}\tilde{\mathbf{z}_i}^{\textrm{H}}\right\}
 \nonumber\\ & = \sum_{k=1, k\neq i}^{K} \sum_{m=1}^{S}
\mathbf{U}_i^{\textrm{H}}\mathbf{H}_{i}^{[k,m]}\mathbf{w}^{[k,m]}
\left(\mathbf{U}_i^{\textrm{H}}\mathbf{H}_{i}^{[k,m]}\mathbf{w}^{[k,m]}\right)^{\textrm{H}}
 + N_0 \mathbf{I}_S.
\end{align}
Then, the  ML decoder for the modified channel (\ref{eq:y_2})
becomes ${\arg \min}_\mathbf{x} \left( \tilde{\mathbf{y}}_i -
\tilde{\mathbf{H}}_i \mathbf{x} \right)^{\textrm{H}} \mathbf{R}^{-1}
\left( \tilde{\mathbf{y}}_i - \tilde{\mathbf{H}}_i \mathbf{x}
\right)$, which is also infeasible to implement since the term
$\mathbf{U}_i^{\textrm{H}}\mathbf{H}_i^{[k,m]}\mathbf{w}^{[k,m]}$
($k\in \{1, \ldots, i-1, i+1, \ldots, K\}$, $m\in \{1, \ldots, S\}$)
is not available at BS $i$.

As an alternative approach, we now introduce the following minimum
Euclidean distance receiver after interference nulling at BS $i$:
\begin{equation}\label{eq:subopt_Rx}
\hat{\mathbf{x}}_i = \arg_{\mathbf{x}} \min \left\|
\tilde{\mathbf{y}}_i - \tilde{\mathbf{H}}_i \mathbf{x} \right\|.
\end{equation}
It is worth noting that the receiver in (\ref{eq:subopt_Rx}) is not
universally optimal since $\mathbf{R}$ is not an identity matrix for
given channel instance. Now, we show the achievable rate based on
the use of the receiver in (\ref{eq:subopt_Rx}). The maximum
achievable rate of any suboptimal receiver, referred to as
\textit{mismatch capacity} \cite{A_Ganti00_TIT,N_Merhav94_TIT}, is
lower-bounded by the generalized mutual information, defined as
\cite{A_Ganti00_TIT,N_Merhav94_TIT}
\begin{equation} \label{eq:I_GMI}
I_{\textrm{GMI}} = \sup_{\theta\ge 0} I(\theta),
\end{equation}
where
\begin{equation} \label{eq:I_GMI_p}
I(\theta) \triangleq E \left[ \log_2 \frac{Q
(\tilde{\mathbf{y}}_i|\mathbf{x}_i)^{\theta}}{E \left[
Q(\tilde{\mathbf{y}}_i|\mathbf{x}_i)^{\theta}\big|\tilde{\mathbf{y}}_i,
\tilde{\mathbf{H}}_i\right]}\bigg| \tilde{\mathbf{H}}_i\right]
\end{equation}
and $Q(\tilde{\mathbf{y}}_i|\mathbf{x}_i)$ is the decoding metric
expressed in probability. The following lemma characterizes the GMI
of the decoder with mismatched noise covariance matrix.

\begin{lemma} \label{lemma:mismatch}
Consider the modified channel (\ref{eq:y_2}) and the decoding metric
with mismatched noise covariance matrix $\hat{\mathbf{R}}$, given by
\begin{equation}\label{eq:metric_mismatched_R}
Q(\tilde{\mathbf{y}}_i|\mathbf{x}_i) =
\exp\left(-(\tilde{\mathbf{y}}_i -
\tilde{\mathbf{H}}_i\mathbf{x}_i)^{\textrm{H}} \hat{\mathbf{R}}^{-1}
(\tilde{\mathbf{y}}_i - \tilde{\mathbf{H}}_i\mathbf{x}_i)\right).
\end{equation}
Then, the GMI $I_{\textrm{GMI}}$ based on the metric
(\ref{eq:metric_mismatched_R}) is given by  (\ref{eq:I_GMI}), where
$I(\theta)$ for given $\hat{\mathbf{R}}$ is expressed as
\begin{align}\label{eq:I_Lemma}
&I(\theta) = -\frac{\theta}{\log 2}
\textrm{tr}(\hat{\mathbf{R}}^{-1/2}\mathbf{R}\hat{\mathbf{R}}^{-1/2})
 + \frac{\theta}{\log2}\left[  \textrm{tr} \left(
\boldsymbol{\Omega}^{-1} \hat{\mathbf{R}}^{-1}\left(
\tilde{\mathbf{H}}_i\tilde{\mathbf{H}}_i^{\textrm{H}}+
\mathbf{R}\right)\right)\right] + \log_2 \det (\boldsymbol{\Omega}),
\end{align}
$\mathbf{R}$ is given by (\ref{eq:R_def}), and $\boldsymbol{\Omega}
\triangleq
\theta\hat{\mathbf{R}}^{-1}\tilde{\mathbf{H}}_i\tilde{\mathbf{H}}_i^{\textrm{H}}
+ \mathbf{I}_S$.
\end{lemma}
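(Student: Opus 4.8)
The plan is to evaluate $I(\theta)$ in (\ref{eq:I_GMI_p}) directly, writing it as the difference $I(\theta)=E[\log_2 Q(\tilde{\mathbf{y}}_i|\mathbf{x}_i)^{\theta}\,|\,\tilde{\mathbf{H}}_i]-E[\log_2 E[Q(\tilde{\mathbf{y}}_i|\mathbf{x})^{\theta}\,|\,\tilde{\mathbf{y}}_i,\tilde{\mathbf{H}}_i]\,|\,\tilde{\mathbf{H}}_i]$, where in the inner conditional expectation $\mathbf{x}$ is an independent copy of the channel input drawn from the assumed Gaussian codebook $\mathcal{CN}(\mathbf{0},\mathbf{I}_S)$. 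For the first term I would substitute $\tilde{\mathbf{y}}_i-\tilde{\mathbf{H}}_i\mathbf{x}_i=\tilde{\mathbf{z}}_i$ into the metric (\ref{eq:metric_mismatched_R}), so that $\log_2 Q^{\theta}=-\frac{\theta}{\log 2}\,\tilde{\mathbf{z}}_i^{\textrm{H}}\hat{\mathbf{R}}^{-1}\tilde{\mathbf{z}}_i$; taking the expectation and using $E[\tilde{\mathbf{z}}_i\tilde{\mathbf{z}}_i^{\textrm{H}}]=\mathbf{R}$ from (\ref{eq:R_def}) together with the cyclic property of the trace gives $-\frac{\theta}{\log 2}\textrm{tr}(\hat{\mathbf{R}}^{-1}\mathbf{R})=-\frac{\theta}{\log 2}\textrm{tr}(\hat{\mathbf{R}}^{-1/2}\mathbf{R}\hat{\mathbf{R}}^{-1/2})$, which is precisely the first term of (\ref{eq:I_Lemma}).

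The bulk of the work is the denominator. Since $\mathbf{x}\sim\mathcal{CN}(\mathbf{0},\mathbf{I}_S)$, the inner expectation $E[Q^{\theta}\,|\,\tilde{\mathbf{y}}_i,\tilde{\mathbf{H}}_i]$ is a complex Gaussian integral over $\mathbf{x}$ of $\exp(-\mathbf{x}^{\textrm{H}}\mathbf{x})$ times the $\theta$-th power of (\ref{eq:metric_mismatched_R}). Completing the square in $\mathbf{x}$ and applying the standard identity $\int \pi^{-S}\exp(-\mathbf{x}^{\textrm{H}}\boldsymbol{\Phi}\mathbf{x}+\mathbf{a}^{\textrm{H}}\mathbf{x}+\mathbf{x}^{\textrm{H}}\mathbf{a})\,d\mathbf{x}=(\det\boldsymbol{\Phi})^{-1}\exp(\mathbf{a}^{\textrm{H}}\boldsymbol{\Phi}^{-1}\mathbf{a})$, with $\boldsymbol{\Phi}\triangleq\mathbf{I}_S+\theta\tilde{\mathbf{H}}_i^{\textrm{H}}\hat{\mathbf{R}}^{-1}\tilde{\mathbf{H}}_i$ and $\mathbf{a}\triangleq\theta\tilde{\mathbf{H}}_i^{\textrm{H}}\hat{\mathbf{R}}^{-1}\tilde{\mathbf{y}}_i$, yields a closed form. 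By Sylvester's determinant identity, $\det\boldsymbol{\Phi}=\det(\mathbf{I}_S+\theta\hat{\mathbf{R}}^{-1}\tilde{\mathbf{H}}_i\tilde{\mathbf{H}}_i^{\textrm{H}})=\det\boldsymbol{\Omega}$, which supplies the $\log_2\det\boldsymbol{\Omega}$ term, while the exponent is a quadratic form in $\tilde{\mathbf{y}}_i$.

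I would then take the outer expectation over $\tilde{\mathbf{y}}_i$ conditioned on $\tilde{\mathbf{H}}_i$. Each quadratic form $\tilde{\mathbf{y}}_i^{\textrm{H}}\mathbf{B}\tilde{\mathbf{y}}_i$ becomes $\textrm{tr}(\mathbf{B}\mathbf{A})$, where $\mathbf{A}\triangleq\tilde{\mathbf{H}}_i\tilde{\mathbf{H}}_i^{\textrm{H}}+\mathbf{R}$ is the conditional second-moment matrix of $\tilde{\mathbf{y}}_i$; note that only this second moment is needed, so no Gaussian assumption on the inter-cell interference is required. This produces two trace terms, $\theta\,\textrm{tr}(\hat{\mathbf{R}}^{-1}\mathbf{A})$ and $-\theta^{2}\,\textrm{tr}(\hat{\mathbf{R}}^{-1}\tilde{\mathbf{H}}_i\boldsymbol{\Phi}^{-1}\tilde{\mathbf{H}}_i^{\textrm{H}}\hat{\mathbf{R}}^{-1}\mathbf{A})$, both scaled by $1/\log 2$.

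The step I expect to be the crux is collapsing these two traces into the single term $\frac{\theta}{\log 2}\textrm{tr}(\boldsymbol{\Omega}^{-1}\hat{\mathbf{R}}^{-1}\mathbf{A})$ of (\ref{eq:I_Lemma}). This amounts to the matrix identity $\hat{\mathbf{R}}^{-1}-\theta\hat{\mathbf{R}}^{-1}\tilde{\mathbf{H}}_i\boldsymbol{\Phi}^{-1}\tilde{\mathbf{H}}_i^{\textrm{H}}\hat{\mathbf{R}}^{-1}=\boldsymbol{\Omega}^{-1}\hat{\mathbf{R}}^{-1}$, which I would establish with the Woodbury (push-through) identity: since $\boldsymbol{\Omega}^{-1}=(\mathbf{I}_S+\theta\hat{\mathbf{R}}^{-1}\tilde{\mathbf{H}}_i\tilde{\mathbf{H}}_i^{\textrm{H}})^{-1}=\mathbf{I}_S-\theta\hat{\mathbf{R}}^{-1}\tilde{\mathbf{H}}_i\boldsymbol{\Phi}^{-1}\tilde{\mathbf{H}}_i^{\textrm{H}}$, right-multiplying by $\hat{\mathbf{R}}^{-1}$ gives exactly the desired left-hand side. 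Substituting this into the trace, and adding back the numerator term and the $\log_2\det\boldsymbol{\Omega}$ term, reproduces (\ref{eq:I_Lemma}) and completes the proof. Aside from this identity, the only real pitfalls are bookkeeping the $1/\log 2$ conversion factors and keeping the Hermitian quadratic forms straight.
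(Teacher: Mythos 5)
Your proof is correct, and it follows the same skeleton as the paper's Appendix B: split $I(\theta)$ into the numerator term, which reduces to $-\frac{\theta}{\log 2}\textrm{tr}(\hat{\mathbf{R}}^{-1}\mathbf{R})$ via $E[\tilde{\mathbf{z}}_i\tilde{\mathbf{z}}_i^{\textrm{H}}]=\mathbf{R}$, and the denominator, evaluated as a complex Gaussian integral over an independent copy of the input by completing the square, with Sylvester's identity $\det(\mathbf{I}_S+\theta\tilde{\mathbf{H}}_i^{\textrm{H}}\hat{\mathbf{R}}^{-1}\tilde{\mathbf{H}}_i)=\det(\boldsymbol{\Omega})$ supplying the log-det term (the paper states this as $\det(\tilde{\boldsymbol{\Omega}})=\det(\boldsymbol{\Omega})$; your $\boldsymbol{\Phi}$ is the paper's $\tilde{\boldsymbol{\Omega}}$). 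The one genuine difference is how the crux simplification is carried out: the paper proves $\hat{\mathbf{R}}^{-1}-\theta\hat{\mathbf{R}}^{-1}\tilde{\mathbf{H}}_i\tilde{\boldsymbol{\Omega}}^{-1}\tilde{\mathbf{H}}_i^{\textrm{H}}\hat{\mathbf{R}}^{-1}=\boldsymbol{\Omega}^{-1}\hat{\mathbf{R}}^{-1}$ by taking an SVD $\hat{\mathbf{R}}^{-1/2}\tilde{\mathbf{H}}_i=\boldsymbol{\Phi}\boldsymbol{\Lambda}\mathbf{T}^{\textrm{H}}$ and evaluating diagonal terms across (\ref{eq:tilde_sigma_p})--(\ref{eq:const_term5_p}), whereas you obtain it in one line from the push-through identity $(\mathbf{I}_S+\mathbf{A}\mathbf{B})^{-1}=\mathbf{I}_S-\mathbf{A}(\mathbf{I}_S+\mathbf{B}\mathbf{A})^{-1}\mathbf{B}$ followed by right-multiplication with $\hat{\mathbf{R}}^{-1}$; your route is shorter and avoids the SVD bookkeeping entirely. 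You also swap the order of two commuting steps, taking the expectation over $\tilde{\mathbf{y}}_i$ first (replacing each quadratic form by a trace against the second moment $\tilde{\mathbf{H}}_i\tilde{\mathbf{H}}_i^{\textrm{H}}+\mathbf{R}$) and collapsing the two traces afterward, while the paper collapses the quadratic form in $\tilde{\mathbf{y}}_i$ before averaging; this is immaterial, and your side remark that only the second moment of $\tilde{\mathbf{y}}_i$ enters, so no Gaussianity of the residual inter-cell interference is needed, is accurate and slightly sharpens what the paper leaves implicit. Both arguments invoke the Gaussian assumption only for the input $\mathbf{x}_i$ in the inner integral, and both yield (\ref{eq:I_Lemma}) exactly.
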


\begin{proof}
See Appendix \ref{app:modified_Rx}.
\end{proof}

We remark that if $\mathbf{R} = \hat{\mathbf{R}}$, then it is
obvious to show $I_{\textrm{GMI}}=I(\theta=1)$. In this case, using
(\ref{eq:I_Lemma}), $I(\theta=1)$ can be simplified to $I(\theta=1)
=\log_2 \det
(\mathbf{R}^{-1}\tilde{\mathbf{H}}_i\tilde{\mathbf{H}}_i^{\textrm{H}}
+ \mathbf{I}_S)$, which is equal to the channel capacity. The
following theorem characterizes the achievable rate of the proposed
minimum Euclidean distance decoder.

\begin{theorem}[Asymptotic capacity]\label{theorem:asymp_capacity}
The GMI $I_{\textrm{GMI}}$ of the codebook-based OIA using the
minimum Euclidean distance receiver in (\ref{eq:subopt_Rx}) is given
by
\begin{align} \label{eq:I_GMI_prop}
&I_{\textrm{GMI}} = \sup_{\theta\ge 0} -\frac{\theta}{\log 2}
\textrm{tr}(N_0^{-1}\mathbf{R}) + \frac{\theta}{\log2}\left[
\textrm{tr} \left( N_0^{-1}{\boldsymbol{\Omega}^{\prime}}^{-1}
\left( \tilde{\mathbf{H}}_i\tilde{\mathbf{H}}_i^{\textrm{H}}+
\mathbf{R}\right)\right)\right] + \log_2 \det
(\boldsymbol{\Omega}^{\prime}),
\end{align}
which asymptotically achieves the channel capacity $I_{\textrm{C}}$
if $N=\omega\left( \textrm{SNR}^{(K-1)S-L+1}\right)$ and
$n_f=\Omega\left(L\log_2(\textrm{SNR})\right)$, where
$\boldsymbol{\Omega}^{\prime} = \theta
N_0^{-1}\tilde{\mathbf{H}}_i\tilde{\mathbf{H}}_i^{\textrm{H}} +
\mathbf{I}_S$.
\end{theorem}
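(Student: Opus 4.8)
The plan is to prove Theorem \ref{theorem:asymp_capacity} in two stages: first obtain the closed-form expression (\ref{eq:I_GMI_prop}) as a direct specialization of Lemma \ref{lemma:mismatch}, and then show that the covariance mismatch vanishes asymptotically under the stated scaling conditions, so that the decoder becomes matched in the limit. For the first stage I would note that the minimum Euclidean distance receiver (\ref{eq:subopt_Rx}) is exactly the mismatched decoder with metric (\ref{eq:metric_mismatched_R}) for the assumed covariance $\hat{\mathbf{R}}=N_0\mathbf{I}_S$: scaling $\hat{\mathbf{R}}$ by a positive constant leaves the $\arg\min$ unchanged and is absorbed into the supremum over $\theta$. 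Substituting $\hat{\mathbf{R}}=N_0\mathbf{I}_S$ into (\ref{eq:I_Lemma}) gives $\hat{\mathbf{R}}^{-1/2}\mathbf{R}\hat{\mathbf{R}}^{-1/2}=N_0^{-1}\mathbf{R}$ and $\boldsymbol{\Omega}=\theta N_0^{-1}\tilde{\mathbf{H}}_i\tilde{\mathbf{H}}_i^{\textrm{H}}+\mathbf{I}_S=\boldsymbol{\Omega}^{\prime}$, which reproduces (\ref{eq:I_GMI_prop}). This step is mechanical.

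The substance is the asymptotic claim. The key observation I would exploit is that the non-$N_0\mathbf{I}_S$ part of $\mathbf{R}$ in (\ref{eq:R_def}) is precisely the residual inter-cell interference covariance $\sum_{k\neq i}\sum_m \mathbf{U}_i^{\textrm{H}}\mathbf{H}_i^{[k,m]}\mathbf{w}^{[k,m]}(\cdots)^{\textrm{H}}$, which is positive semidefinite and whose trace equals $\sum_{k\neq i}\sum_m\|\mathbf{U}_i^{\textrm{H}}\mathbf{H}_i^{[k,m]}\mathbf{w}^{[k,m]}\|^2=I^{[i,j]}\textrm{SNR}^{-1}$. By Theorems \ref{theorem:CB} and \ref{theorem:RVQ}, under $N=\omega(\textrm{SNR}^{(K-1)S-L+1})$ and $n_f=\Omega(L\log_2\textrm{SNR})$ we have $I^{[i,j]}\le\epsilon$ with high probability for an arbitrarily small constant $\epsilon>0$. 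Since the operator norm of a positive semidefinite matrix is bounded by its trace, this yields $\|\mathbf{R}-N_0\mathbf{I}_S\|\le\epsilon N_0$, hence $\|N_0^{-1}\mathbf{R}-\mathbf{I}_S\|\le\epsilon$. Thus $N_0^{-1}\mathbf{R}\to\mathbf{I}_S$ with high probability, i.e., the true effective noise covariance converges to the covariance assumed by the receiver.

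I would then close the argument by a squeeze. On one side, the GMI never exceeds the matched (nulled-channel) capacity, so $I_{\textrm{GMI}}\le\log_2\det(\mathbf{R}^{-1}\tilde{\mathbf{H}}_i\tilde{\mathbf{H}}_i^{\textrm{H}}+\mathbf{I}_S)$, which tends to $\log_2\det(N_0^{-1}\tilde{\mathbf{H}}_i\tilde{\mathbf{H}}_i^{\textrm{H}}+\mathbf{I}_S)$ as $\mathbf{R}\to N_0\mathbf{I}_S$. On the other side, $I_{\textrm{GMI}}=\sup_{\theta\ge0}I(\theta)\ge I(\theta=1)$, and the objective in (\ref{eq:I_GMI_prop}) evaluated at $\theta=1$ is a continuous function of $\mathbf{R}$ that equals $\log_2\det(N_0^{-1}\tilde{\mathbf{H}}_i\tilde{\mathbf{H}}_i^{\textrm{H}}+\mathbf{I}_S)$ when $\mathbf{R}=\hat{\mathbf{R}}=N_0\mathbf{I}_S$, as noted in the remark preceding the theorem. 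Both bounds therefore converge to the same interference-free value, which is the limiting channel capacity $I_{\textrm{C}}$, giving $I_{\textrm{GMI}}\to I_{\textrm{C}}$.

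The step I expect to be the main obstacle is making the continuity argument uniform. The convergence $N_0^{-1}\mathbf{R}\to\mathbf{I}_S$ only transfers to convergence of $I(\theta=1)$ if $\tilde{\mathbf{H}}_i$ remains well conditioned, so I would first establish that the singular values of $\tilde{\mathbf{H}}_i$ stay bounded away from $0$ and $\infty$ with high probability and argue on that event; this is also what keeps the maximizing $\theta$ near $1$ rather than escaping. Controlling the $\sup_{\theta}$ jointly with the vanishing mismatch, instead of merely evaluating at $\theta=1$, is where the care is required.
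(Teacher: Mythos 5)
Your first stage and your convergence argument for the noise covariance are sound: specializing Lemma \ref{lemma:mismatch} with $\hat{\mathbf{R}}=N_0\mathbf{I}_S$ is exactly how the paper obtains (\ref{eq:I_GMI_prop}), and your trace bound $\textrm{tr}\left(\mathbf{R}-N_0\mathbf{I}_S\right)=\sum_{k\neq i}\sum_{m}\big\|\mathbf{U}_i^{\textrm{H}}\mathbf{H}_i^{[k,m]}\mathbf{w}^{[k,m]}\big\|^2\le\epsilon N_0$ is, if anything, more explicit than the paper's corresponding step, which simply invokes Theorems \ref{theorem:CB} and \ref{theorem:RVQ} to conclude $\mathbf{R}\rightarrow N_0\mathbf{I}_S$. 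Your squeeze then legitimately yields $I_{\textrm{GMI}}\rightarrow\log_2\det\left(N_0^{-1}\tilde{\mathbf{H}}_i\tilde{\mathbf{H}}_i^{\textrm{H}}+\mathbf{I}_S\right)$. The genuine gap is the final move: ``which is the limiting channel capacity $I_{\textrm{C}}$'' is asserted, not proved, and it is the nontrivial half of the theorem. The capacity $I_{\textrm{C}}$ in (\ref{eq:I_CAP}) is defined on the \emph{full} $M$-dimensional channel $\tilde{\mathbf{H}}_i^{(c)}$ with interference-plus-noise covariance $\mathbf{R}_c$ in (\ref{eq:Rc}), whereas your entire squeeze lives on the $S$-dimensional nulled channel (\ref{eq:y_2}). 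Projection by $\mathbf{U}_i^{\textrm{H}}$ discards $M-S$ receive dimensions that generically carry desired-signal energy ($\mathbf{Q}_i^{\textrm{H}}\tilde{\mathbf{H}}_i^{(c)}\neq\mathbf{0}$), and since the aligned interference has finite power, at any finite SNR the nulled-channel capacity is strictly below $I_{\textrm{C}}$; the two coincide only as $N_0\rightarrow0$, and that requires an argument you do not supply.

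That argument is the second half of the paper's proof. There, $I^{*}_{\textrm{GMI}}$ is rewritten as $\log_2\det\left(\mathbf{D}\tilde{\mathbf{H}}_i^{(c)}\left(\tilde{\mathbf{H}}_i^{(c)}\right)^{\textrm{H}}\mathbf{D}+\mathbf{I}_M\right)$ with $\mathbf{D}=N_0^{-1/2}\mathbf{U}_i\mathbf{U}_i^{\textrm{H}}$, and perfect alignment (\ref{eq:interf_aligned}) is used to express $\mathbf{R}_c=[\mathbf{Q}_i,\mathbf{U}_i]\left[\begin{smallmatrix}\boldsymbol{\Xi} & \mathbf{0}\\ \mathbf{0} & N_0\mathbf{I}_S\end{smallmatrix}\right][\mathbf{Q}_i,\mathbf{U}_i]^{\textrm{H}}$, where the interference block $\boldsymbol{\Xi}$ is almost surely invertible and stays bounded for a given channel instance. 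Hence $\mathbf{R}_c^{-1/2}\rightarrow N_0^{-1/2}\mathbf{U}_i\mathbf{U}_i^{\textrm{H}}=\mathbf{D}$ as $N_0$ decreases, so the full-channel capacity $I_{\textrm{C}}$ collapses to exactly the nulled-channel expression; it is this computation that licenses equating your squeeze limit with $I_{\textrm{C}}$. By contrast, the obstacle you flagged as the main difficulty---uniformity of the continuity in $\theta$ and the conditioning of $\tilde{\mathbf{H}}_i$---is a secondary refinement (the paper handles it per channel instance without comment). Without the $\mathbf{R}_c$ decomposition step, your proposal only shows that the minimum Euclidean distance receiver achieves the capacity of the degraded channel after nulling, not the channel capacity claimed by the theorem.
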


\begin{proof}
The decoder in (\ref{eq:subopt_Rx}) is equivalent to the one that
utilizes the decoding metric $Q(\tilde{\mathbf{y}}_i|\mathbf{x}_i)$
in (\ref{eq:metric_mismatched_R}) with $\hat{\mathbf{R}} = N_0
\mathbf{I}_S$. From Lemma \ref{lemma:mismatch}, the GMI based the
minimum Euclidean distance decoder is thus given by
(\ref{eq:I_GMI_prop}).

Now, we show that as $N$ increases, $I_{\textrm{GMI}}$ approaches
$I_{\textrm{C}}$ with increasing SNR. Recall that the user selection
and weight vector design are performed such that interference is
aligned to the reference basis matrix $\mathbf{Q}_i\in \mathbb{C}^{M
\times (M-S)}$ as much as possible. From Theorem \ref{theorem:CB}
and \ref{theorem:RVQ}, if $N=\omega\left(
\textrm{SNR}^{(K-1)S-L+1}\right)$ and
$n_f=\Omega\left(L\log_2(\textrm{SNR})\right)$, then the sum of
received interference aligned to $\mathbf{U}_i$ can be made
arbitrarily small with high probability, thereby resulting in
\begin{equation}\label{eq:interf_aligned}
\mathbf{H}_i^{[k,m]} \mathbf{w}^{[k,m]} \in \textrm{span}
(\mathbf{Q}_i)
\end{equation}
from the fact that the interference term of (\ref{eq:y_org_re}) is
given by $\sum_{k=1, k\neq i}^{K}
\sum_{m=1}^{S}\mathbf{H}_{i}^{[k,m]}\mathbf{w}^{[k,m]}x^{[k,m]}$.
Since
$\mathbf{U}_i^{\textrm{H}}\mathbf{H}_{i}^{[k,m]}\mathbf{w}^{[k,m]} =
0$, the interference term of (\ref{eq:y_2}) is canceled out, and
thus it follows that $\mathbf{R} \rightarrow \hat{\mathbf{R}} = N_0
\mathbf{I}_S$. In this case, we have
\begin{equation} \label{eq:I_GMI2}
I_{\textrm{GMI}} \rightarrow I_{\textrm{GMI}}^{*}= \log_2 \det
\left( N_0^{-1}
\tilde{\mathbf{H}}_i\tilde{\mathbf{H}}_i^{\textrm{H}}+
\mathbf{I}_S\right).
\end{equation}

Now we prove that $I_{\textrm{GMI}}^{*}$ in (\ref{eq:I_GMI2})
asymptotically achieves $I_{\textrm{C}}$. Since $[\mathbf{U}_i,
\mathbf{Q}_i]\in \mathbb{C}^{M \times M}$ is an orthogonal matrix,
$I_{\textrm{GMI}}^{*}$ can be rewritten as
\begin{align}
I_{\textrm{GMI}}^{*} & = \log_2 \det \left( N_0^{-1} \mathbf{U}_i^{\textrm{H}}\tilde{\mathbf{H}}_i^{(c)}\left(\tilde{\mathbf{H}}_i^{(c)}\right)^{\textrm{H}}\mathbf{U}_i + \mathbf{I}_S\right), \nonumber\\
& = \log_2 \det \left( \left[
                       \begin{array}{cc}
                         {N_0}^{-1}\mathbf{U}_i^{\textrm{H}}\tilde{\mathbf{H}}_i^{(c)}\left(\tilde{\mathbf{H}}_i^{(c)}\right)^{\textrm{H}}\mathbf{U}_i  & \mathbf{0} \nonumber\\
                         \mathbf{0} & \mathbf{0} \nonumber\\
                       \end{array}
                     \right]\!+\! \mathbf{I}_M\!
\right)\nonumber\\
& = \log_2 \det \Biggl( [\mathbf{U}_i, \mathbf{Q}_i] \cdot \left[
                       \begin{array}{cc}
                         {N_0}^{-1}\mathbf{U}_i^{\textrm{H}}\tilde{\mathbf{H}}_i^{(c)}\left(\tilde{\mathbf{H}}_i^{(c)}\right)^{\textrm{H}}\mathbf{U}_i  & \mathbf{0} \\
                         \mathbf{0} & \mathbf{0}\nonumber \\
                       \end{array}
                     \right] \cdot[\mathbf{U}_i, \mathbf{Q}_i] ^{\textrm{H}}+ \mathbf{I}_M
\Biggr)\nonumber\\
%&=\log \det \left( N_0^{-1} [\mathbf{U}_i, \mathbf{Q}_i]\cdot\mathbf{U}_i^{\textrm{H}}\tilde{\mathbf{H}}_i^{(c)}\left(\tilde{\mathbf{H}}_i^{(c)}\right)^{\textrm{H}}\mathbf{U}_i\cdot[\mathbf{U}_i, \mathbf{Q}_i]^{\textrm{H}} + \mathbf{I}_M\right), \\
\label{eq:I_GMI_3}& = \log_2 \det \left(
\mathbf{D}\tilde{\mathbf{H}}_i^{(c)}\left(\tilde{\mathbf{H}}_i^{(c)}\right)^{\textrm{H}}\mathbf{D}
+ \mathbf{I}_M\right),
\end{align}
where
\begin{align}
\mathbf{D} & = [\mathbf{U}_i, \mathbf{Q}_i] \left[ \begin{array}{cc}
                                                     N_0^{-1/2}\mathbf{I}_S & \mathbf{0} \\
                                                     \mathbf{0} & \mathbf{0}
                                                   \end{array}
\right][\mathbf{U}_i, \mathbf{Q}_i]^{\textrm{H}}  = N_0^{-1/2}
\mathbf{U}_i {\mathbf{U}_i}^{\textrm{H}}. \nonumber
\end{align}
From the fact that from (\ref{eq:interf_aligned}),
$\mathbf{H}_i^{[k,m]} \mathbf{w}^{[k,m]}$ can be represented as a
linear combination of $\mathbf{q}_{i,1}, \ldots, \mathbf{q}_{i,
M-S}$, $\mathbf{R}_c$  in (\ref{eq:Rc}) is written as
\begin{align}
\mathbf{R}_c & = \sum_{k=1, k\neq i}^{K} \sum_{m=1}^{S}  \sum_{l=1}^{M-S} \sum_{t=1}^{M-S}\tilde{\xi}^{[k,m]}_{l,t} \mathbf{q}_{i,l}\mathbf{q}_{i,t}^{\textrm{H}}+ N_0 \mathbf{I}_M \nonumber\\
& = \sum_{l=1}^{M-S}\sum_{t=1}^{M-S}\xi_{l,t} \mathbf{q}_{i,l}\mathbf{q}_{i,t}^{\textrm{H}}+ N_0 \mathbf{I}_M \nonumber\\
& = \sum_{l=1}^{M-S}\sum_{t=1}^{M-S}\xi_{l,t} \mathbf{q}_{i,l}\mathbf{q}_{i,t}^{\textrm{H}}+ N_0 [\mathbf{Q}_i, \mathbf{U}_i] \cdot [\mathbf{Q}_i, \mathbf{U}_i]^{\textrm{H}}\nonumber\\
& = [\mathbf{Q}_i, \mathbf{U}_i] \cdot \left[
                        \begin{array}{cc}
                          \boldsymbol{\Xi} & \mathbf{0} \\
                          \mathbf{0} & N_0 \mathbf{I}_S \\
                        \end{array}
                      \right]\cdot [\mathbf{Q}_i,
                      \mathbf{U}_i]^{\textrm{H}}, \nonumber
\end{align}
where $\tilde{\xi}^{[k,m]}_{l,t}$ denotes the component coefficient
of $\mathbf{H}^{[k,m]}_i\mathbf{w}^{[k,m]}$ on
$\mathbf{q}_{i,l}\mathbf{q}_{i,t}^{\textrm{H}}$, $\xi_{l,t} =
\sum_{k=1, k\neq i}^{K} \sum_{m=1}^{S}  \tilde{\xi}^{[k,m]}_{l,t}$,
and
\begin{equation}
\boldsymbol{\Xi} = \left[ \begin{array}{cccc}
                            \xi_{1,1}+N_0 & \xi_{1,2} & \cdots & \xi_{1,M-S} \\
                            \xi_{2,1} & \xi_{2,2}+N_0 & \cdots & \xi_{2,M-S} \\
                            \vdots & \ddots & \vdots & \vdots \\
                            \xi_{M-S,1} & \xi_{M-S,2} & \cdots & \xi_{M-S,M-S}+N_0
                          \end{array}
\right]. \nonumber
\end{equation}
Since each coefficient $\xi_{l,t}$ is chosen from a continuous
distribution, $\boldsymbol{\Xi}$ has full rank almost surely, and
thus is invertible. Therefore, we get
\begin{equation}
\mathbf{R}_c^{-1/2} = [\mathbf{Q}_i, \mathbf{U}_i]  \left[
                        \begin{array}{cc}
                          \boldsymbol{\Xi}^{-1/2} & \mathbf{0} \\
                          \mathbf{0} & {N_0}^{-1/2} \mathbf{I}_S \\
                        \end{array}
                      \right][\mathbf{Q}_i,
                      \mathbf{U}_i]^{\textrm{H}}. \nonumber
\end{equation}
For given channel instance, as $N_0$ decreases,
$\mathbf{R}_c^{-1/2}$ becomes
\begin{equation} \label{eq:R_c_asympt}
\mathbf{R}_c^{-1/2} \rightarrow {N_0}^{-1/2} \mathbf{U}_i
{\mathbf{U}_i}^{\textrm{H}}.
\end{equation}
Applying the asymptotic $\mathbf{R}_c^{-1/2}$ of
(\ref{eq:R_c_asympt}) to (\ref{eq:I_CAP}) finally yields
$I_{\textrm{C}}$, which is equal to $I_{\textrm{GMI}}^{*}$ of
(\ref{eq:I_GMI_3}). This completes the proof.
\end{proof}

As shown in Theorem \ref{theorem:asymp_capacity}, the minimum
Euclidean distance receiver asymptotically achieves the channel
capacity even without any coordination between the BSs or users.
However, it is worth noting that if the interference alignment level
is too low due to small $N$ or $n_f$ to satisfy the conditions in
Theorems \ref{theorem:CB} and \ref{theorem:RVQ}, then the achievable
rate in (\ref{eq:I_GMI_prop}) may be lower than that based on the ZF
receiver. Thus, in small $N$ and $n_f$ regimes, there may exist
crossovers, where the achievable rate of the two schemes is
switched, which will be shown in Section \ref{SEC:Sim} via numerical
evaluation. We conclude our discussion on the receiver design with
the following remark.

\begin{remark} [DoF achievability of the optimal receiver]
Even with the use of the ML receiver in (\ref{x_hat_ML}) based on
full knowledge of $\mathbf{R}_{c}$, the user and $n_f$ scaling
conditions to achieve $KS$ DoF are the same as those based of the ZF
receiver case, which make the amount of interference bounded even
for increasing SNR.
\end{remark}

\section{Numerical Results} \label{SEC:Sim}

In this section, we run computer simulations to verify the
performance of the proposed two types of codebook-based OIA schemes,
i.e., the Grassmannian and random codebook-based OIAs, for finite
system parameters SNR, $N$, and $n_f$. For comparison, the max-SNR
scheme was used, in which each user employs eigen-beamforming in
terms of maximizing its received SNR and the belonging BS selects
the $S$ users who have the SNR values up to the $S$th largest one.
The SVD-based OIA scheme in~\cite{H_Yang13_TWC} is also compared as
a baseline method.
%In addition, the OIA scheme employing a fixed weight vector without any feed forward, i.e., $\mathbf{w}^{[i,j]} = [1, 0, \ldots, 0]^{\textrm{T}}$ for all users, is considered, which is equivalent to the OIA scheme for SIMO IMAC. Thus, we refer this scheme as `SIMO OIA'.

Figures \ref{fig:LIF_N} show a log-log plot of the sum-LIF (or
equivalently, the sum of generating interference) versus $N$ for the
MIMO IMAC with $K=2$, $M=3$, $L=2$, and (a) $S=2$ or (b) $S=3$.
From Theorems \ref{theorem:CB} and \ref{theorem:RVQ}, the system parameter $S$ governs the trade-off between the achievable DoF and the number of users required to guarantee such DoF~\cite{B_Jung11_TC, H_Yang13_TWC}.\footnote{While the sum-LIF with $S=1$ is lowest compared to the cases with $S=2$ and $S=3$, the case with $S=1$ provides the smallest achievable DoF. For more discussions about optimizing $S$, we refer to \cite{H_Yang13_TWC}.} It is seen that the sum-LIF increases as $S$ grows for any given scheme. However, as addressed in\cite{B_Jung11_TC, H_Yang13_TWC}, note that a smaller LIF does not necessarily leads to a higher achievable rate, especially in the high SNR regime. %Intuitively, the achievable DoF becomes $KS$ as $N$ grows, and hence it can be easily foreseen that the achievable rate of $S=3$ becomes higher than that of $S=2$ with large $N$ in the high SNR regime.
%For more discussions about optimizing $S$, readers are referred to \cite{H_Yang13_TWC}.
In addition, the trade-off between $n_f$ and the sum-LIF is clearly
seen from Fig. \ref{fig:LIF_N}. Although the sum-LIF level of the
codebook-based OIA scheme decreases as $n_f$ increases, its
decreasing rate of the sum-LIF with respect to $N$, representing the
slope of the sum-LIF curve, slightly differs from that of the
SVD-based OIA unless $n_f$ increases according to increasing $N$.
  %Note that for the case $S=1$, the OIA scheme may become trivial, because the number of spatial streams including the interference is 2, and hence, each BS can decode the interference as desired signal through any MIMO decoding. Though the sum-interference with $S=1$ is lowest compared to the cases with $S=2$ and $S=3$, the achievable DoF as well as the achievable rate are lowest for all SNR regime (for more discussions about optimizing $S$, readers are referred to \cite{H_Yang13_TWC}).

\begin{figure}%[htp]
\begin{center}
\subfigure[]{
\includegraphics[width=0.6\textwidth]{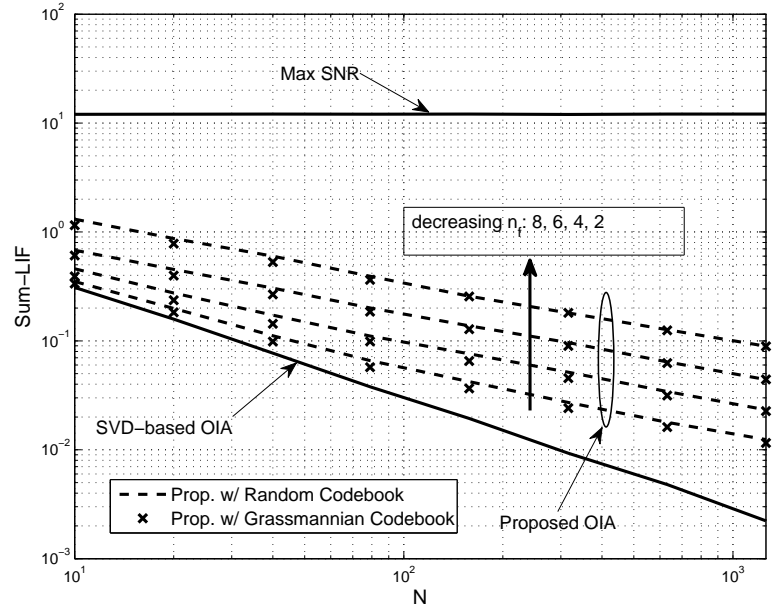}\label{fig:LIF_N_S2} } \hspace{-20pt}
     \subfigure[]{
\includegraphics[width=0.6\textwidth]{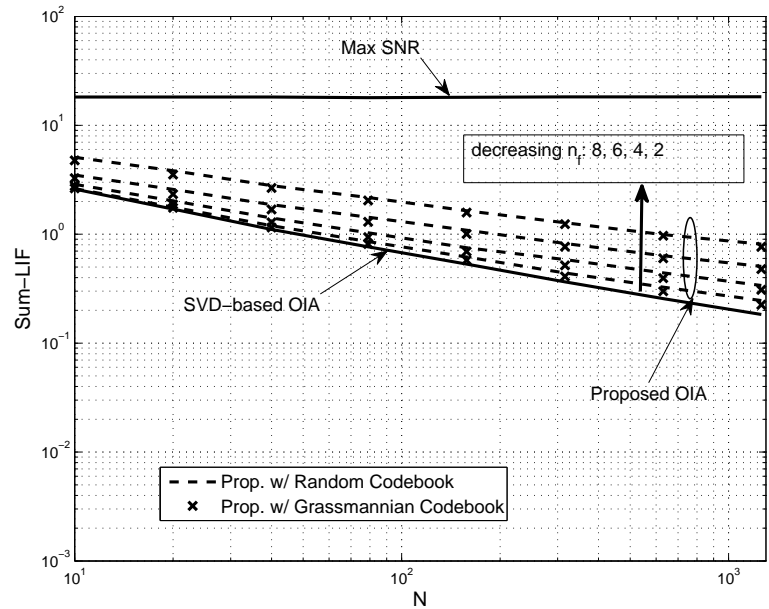}
  \label{fig:LIF_N_S3}}
  \caption{The sum-LIF versus $N$ when $K=2$, $M=3$, and $L=2$. (a) $S=2$. (b) $S=3$.} \label{fig:LIF_N}
\end{center}
\end{figure}

\begin{figure}
\begin{center}
  \includegraphics[width=0.6\textwidth]{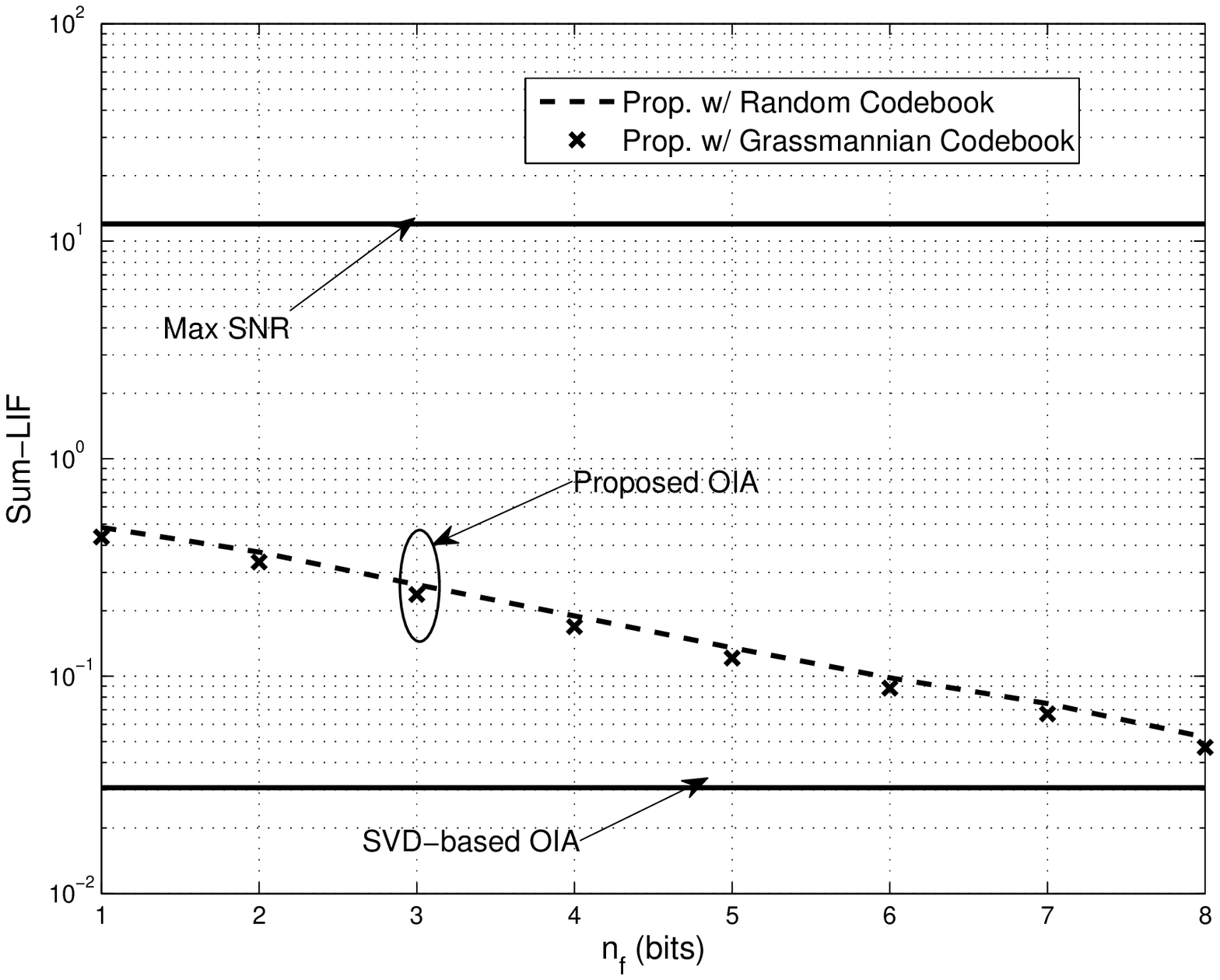}\\
  \caption{The sum-LIF versus $n_f$ when $K=2$, $M=3$, $L=2$, $S=2$, and $N=100$.}\label{fig:LIF_nf}
  \end{center}
\end{figure}

Figure \ref{fig:LIF_nf} illustrates a linear-log plot of the sum-LIF
versus $n_f$ when $K=2$, $M=3$, $L=2$, $S=2$ and $N=100$. As
expected from Theorems \ref{theorem:CB} and \ref{theorem:RVQ}, it is
seen that the decreasing rate of the sum-LIF is almost the same for
both codebook-based OIA schemes. As $n_f$ increases, even with
finite $N$, the sum-LIF level for both codebook-based OIA schemes
becomes close to that for the SVD-based OIA.

\begin{figure}%[htp]
\begin{center}
\subfigure[]{
\includegraphics[width=0.6\textwidth]{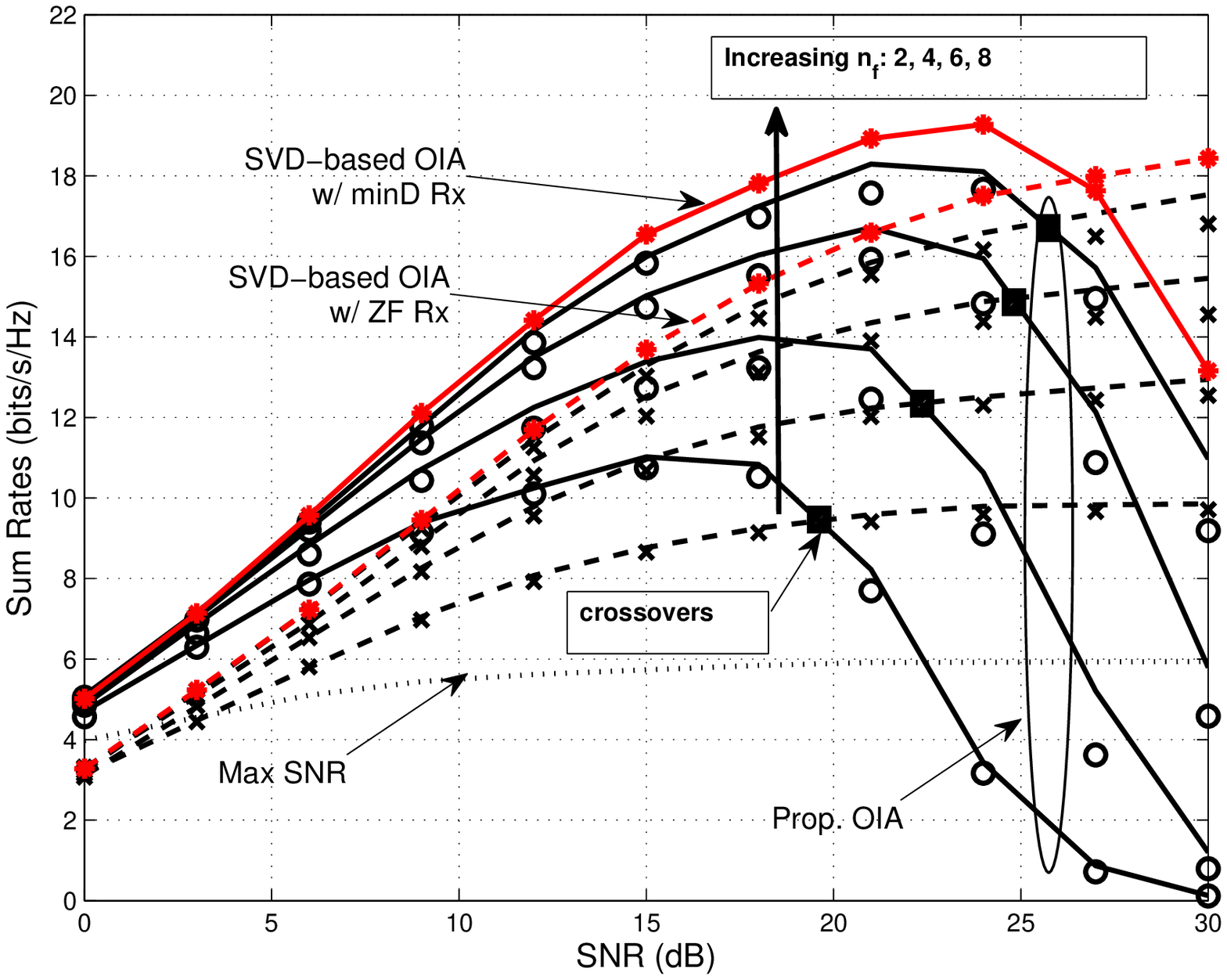}\label{fig:rate_SNR_N20} }
     \subfigure[]{
\includegraphics[width=0.6\textwidth]{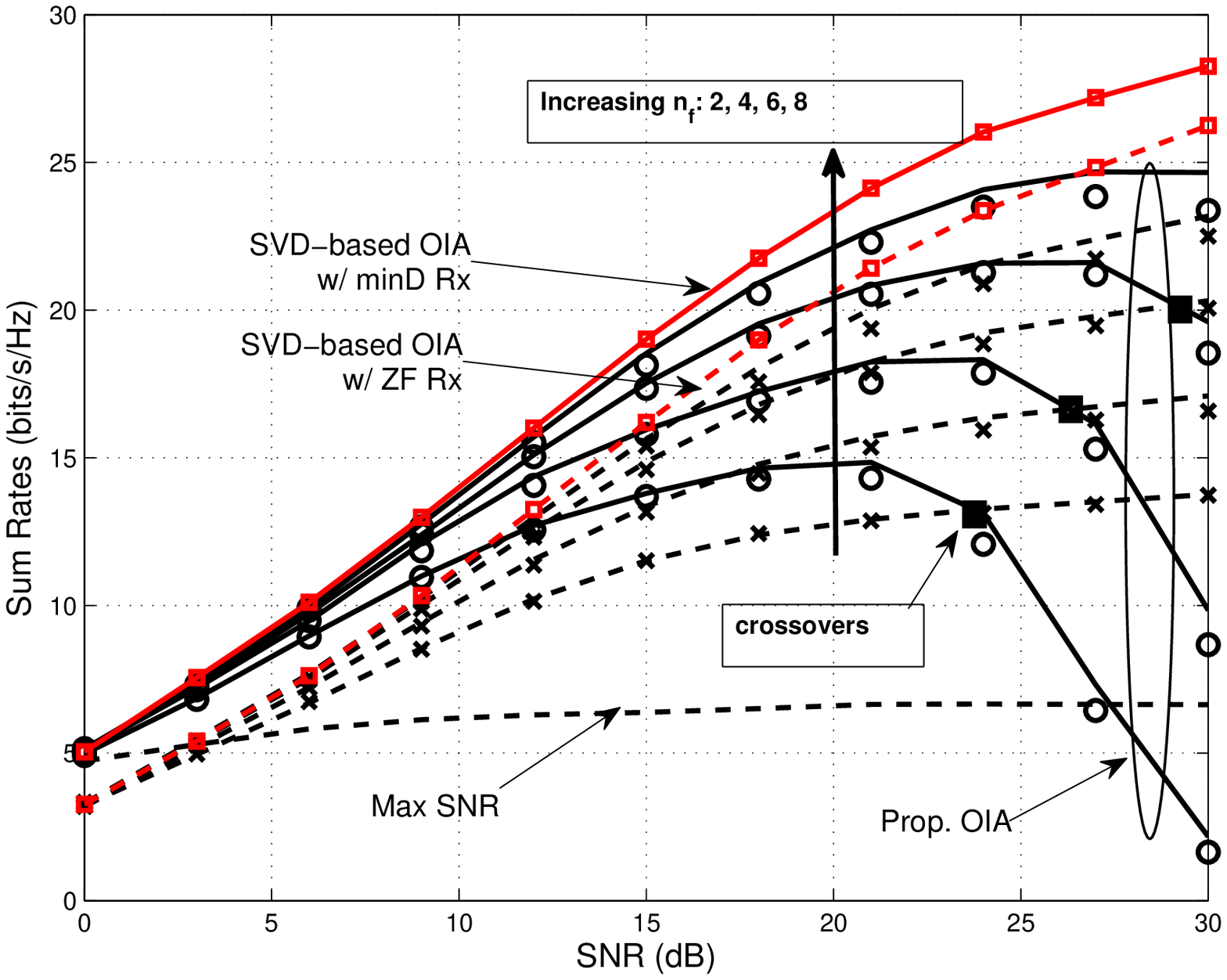}
  \label{fig:rate_SNR_N100}}
  \caption{The achievable rate versus SNR when $K=2$, $M=3$, $L=2$, and $S=2$. (a) $N=20$. (b)
  $N=100$.}\label{fig:rate_SNR}
  %Scheme 1: ZF receiver with the Grassmannian codebook (dashed line), Scheme 2: ZF receiver with the random codebook (x),  Scheme 3: minimum Euclidean distance receiver with the Grassmannian codebook (solid line), Scheme 4: minimum Euclidean distance receiver with the random codebook (o).} \label{fig:rate_SNR}
\end{center}
\end{figure}

Figures \ref{fig:rate_SNR} depicts the achievable rate versus SNR
when $K=2$, $M=3$, $L=2$, $S=2$, and (a) $N=20$ or (b) $N=100$. We
consider the following four receiver structures for the proposed
codebook-based OIA:
\begin{itemize}
\item Scheme 1: ZF receiver with the Grassmannian codebook (dashed line)
\item Scheme 2: ZF receiver with the random codebook (x)
\item Scheme 3: minimum Euclidean distance receiver with the Grassmannian codebook (solid line)
\item Scheme 4: minimum Euclidean distance receiver with the random codebook (o)
\end{itemize}
A relationship between the sum-rate for given $N$ and the number of
feedforward bits, $n_f$, is observed. It is first seen that as
$n_f=8$, the proposed codebook-based OIA schemes closely obtain the
achievable rate of the SVD-based OIA. It is also seen that the gain
coming from the Grassmannian codebook over the the random codebook
is marginal. From Theorem \ref{theorem:asymp_capacity}, we remark
that the achievable rate based on the minimum Euclidean distance
receiver asymptotically achieves the channel capacity if
interference $\mathbf{H}_{i}^{[k,m]}\mathbf{w}^{[k,m]}$ is perfectly
aligned to $\mathbf{Q}_i$ at BS $i$; that is, the covariance matrix
of interference in (\ref{eq:R_def}), $\sum_{k=1, k\neq i}^{K}
\sum_{m=1}^{S}
\mathbf{U}_i^{\textrm{H}}\mathbf{H}_{i}^{[k,m]}\mathbf{w}^{[k,m]}
\left(\mathbf{U}_i^{\textrm{H}}\mathbf{H}_{i}^{[k,m]}\mathbf{w}^{[k,m]}\right)^{\textrm{H}}$,
becomes negligible compared to $N_0 \mathbf{I}_S$ due to the fact
that interference is sufficiently aligned for large $N$. In
addition, it is observed that in the low to mid SNR regimes, using
the minimum Euclidean distance receiver leads to a higher sum rate
than the ZF receiver case even for practical $N$. However, as the
SNR increases beyond a certain point, i.e., in the high SNR regime,
the covariance matrix of interference becomes dominant, thus
yielding a performance degradation of the minimum Euclidean distance
receiver. On the other hand, since the ZF receiver has no such
limitation, its achievable rate increases with SNR. In consequence,
for given $N$, there exist crossovers, where the achievable rate of
the two schemes is switched. It is furthermore seen that when $N$
increases, these crossovers appear at higher SNRs, because our
system is less affected by the covariance matrix of interference
owing to a better interference alignment.
%For example, in case of Grassmannian codebook, the crossover point
%for Scheme 1 and 3 with $n_f=2$ appears at 19dB for $N=20$ in Fig.
%\ref{fig:rate_SNR_N20}, whereas it appears at 23.5dB for $N=100$ as
%shown in Fig. \ref{fig:rate_SNR_N100}.

\begin{figure}
\begin{center}
  \includegraphics[width=0.6\textwidth]{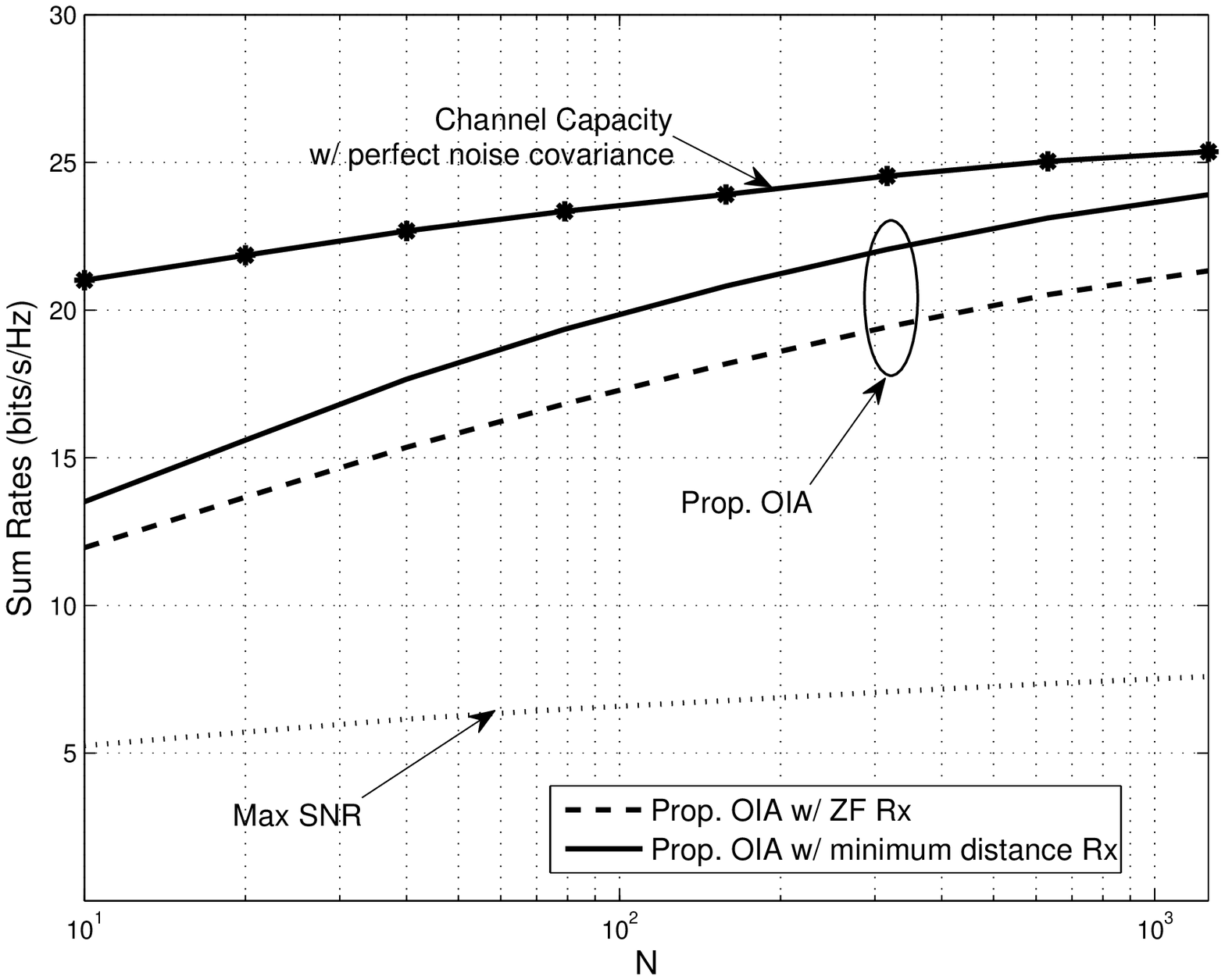}\\
  \caption{The achievable rate versus $N$ when $K=2$, $M=3$, $L=2$, $S=2$, SNR=20dB, and $n_f=6$. The random codebook is used.}\label{fig:rate_N}
  \end{center}
\end{figure}

Figure \ref{fig:rate_N} shows a log-linear plot of the achievable
rate versus $N$ when $K=2$, $M=3$, $L=2$, $S=2$, SNR=20dB, and
$n_f=6$ when the random codebook is used. As shown in Theorem
\ref{theorem:asymp_capacity}, it is seen that the GMI of the
codebook-based OIA using the minimum Euclidean distance receiver
asymptotically achieves the channel capacity as $N$ increases. On
the other hand, the achievable rate of the codebook-based OIA using
the ZF receiver exhibits a constant gap even in large $N$ regime,
compared to that of the minimum Euclidean distance receiver. This
observation is consistent with previous results on the single-user
MIMO channel, showing that there exists a constant SNR gap between
the channel capacity and the achievable rate based on the ZF
receiver in the high SNR regime.

\section{Conclusion} \label{SEC:Conc}
For the MIMO IMAC, we have proposed two different types of
codebook-based OIA methods and analyzed the codebook size required
to achieve the same user scaling condition and DoF as the SVD-based
OIA case.
%, which assumes the full feed forward of the selected users' weight vectors.
We have shown that the required codebook size scaling is the same
for both of the random and Grassmannian codebooks.
%, because the asymptotic performance loss of the both codebooks is the same as evaluated by numerical simulations.
In addition, we have shown that the simple minimum Euclidean
distance receiver operating even with no CSI of inter-cell
interfering links achieves the channel capacity as $N$ increases.
Numerical examples have shown that it suffices for finite $n_f$ to
almost obtain the achievable rate of the SVD-based OIA, e.g., the
case where $n_f=8$ and $L=2$, and that the minimum Euclidean
distance receiver enhances the achievable rate based on the ZF
receiver especially in the low to mid SNR regimes. \pagebreak[0]

\appendices
\section{Proof of (\ref{eq:p_c_asympt})} \label{app:p_c}
By using $\nu_f \le \textrm{SNR}^{-(1+\gamma)}$ in
(\ref{eq:N_f_inequality}), $p_c$ defined in
(\ref{eq:P_prime_LB1_00}) can be lower-bounded by
\begin{align}
p_c &\ge \textrm{Pr} \left\{ {\sigma^{[i,j]}_L}^2 \ge (1+\delta) {\sigma^{[i,j]}_1}^2\textrm{SNR}^{-(1+\gamma)}, \forall i\in \mathcal{K}, j\in \mathcal{N}\right\}\nonumber \\
%& \ge  \textrm{Pr} \left\{ \max_{i\in \mathcal{K}, j\in \mathcal{N}} \frac{{\sigma^{[i,j]}_1}^2}{{\sigma^{[i,j]}_L}^2} \le \textrm{SNR}^{1+\gamma}\right\}
%\\
\label{eq:p_c_LB2}& = \left(\textrm{Pr} \left\{
\frac{{\sigma^{[i,j]}_1}^2}{{\sigma^{[i,j]}_L}^2} \le
(1+\delta)^{-1}\textrm{SNR}^{1+\gamma}\right\}\right)^{KN},
\end{align}
where (\ref{eq:p_c_LB2}) comes from the fact that
$\mathbf{G}^{[i,j]}$ is independent for different $i$ or $j$, and
thus their singular values are independent for different users. Note
that $\frac{{\sigma^{[i,j]}_1}^2}{{\sigma^{[i,j]}_L}^2}$ is the
condition number of
${\mathbf{G}^{[i,j]}}^{\textrm{H}}\mathbf{G}^{[i,j]}$. At this
point, we introduce the following lemma on the CDF of the condition
number.

\begin{lemma} \label{lemma:CDF_cond}
The CDF of $\frac{{\sigma^{[i,j]}_1}^2}{{\sigma^{[i,j]}_L}^2}$,
denoted by $F_c(x)$, is lower-bounded by
\begin{equation} \label{eq:F_c_def}
F_c(x) \ge 1-\rho x^{-((K-1)S-L+1)},
\end{equation}
where $\rho$ is a constant determined by $K$, $S$, and $L$.
\end{lemma}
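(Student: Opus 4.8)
The plan is to control the complementary CDF $1-F_c(x) = \textrm{Pr}\{\sigma_1^{[i,j]2}/\sigma_L^{[i,j]2} > x\}$ and show it is at most $\rho x^{-\psi}$ with $\psi = (K-1)S-L+1$. First I would record the distributional structure of $\mathbf{G}^{[i,j]}$. Since each $\mathbf{U}_k=\textrm{null}(\mathbf{Q}_k)$ is a fixed orthonormal basis generated independently of the channels, the block $\mathbf{U}_k^{\textrm{H}}\mathbf{H}_k^{[i,j]}$ is again i.i.d. complex Gaussian (rotational invariance), and stacking over the $K-1$ independent interfering cells shows that $\mathbf{G}^{[i,j]}\in\mathbb{C}^{(K-1)S\times L}$ has i.i.d. complex Gaussian entries. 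Hence $\mathbf{G}^{[i,j]\textrm{H}}\mathbf{G}^{[i,j]}$ is a complex Wishart matrix whose ordered eigenvalues are exactly $\sigma_1^{[i,j]2}\ge\cdots\ge\sigma_L^{[i,j]2}$, and the condition number is scale-invariant, so the variance normalization $1/L$ is irrelevant.

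The governing observation is that a large condition number is caused almost entirely by a \emph{small} smallest singular value, because the largest one concentrates: $\sigma_1^{[i,j]2}$ is dominated by $\|\mathbf{G}^{[i,j]}\|_F^2$, a (scaled) Gamma variate with an exponentially light upper tail and finite moments of every order. The engine for the exponent $\psi$ is the smallest-eigenvalue scaling already used in the proof of Theorem~\ref{theorem:CB}: by \cite[Theorem~4]{S_Jin08_TC}, $\textrm{Pr}\{\sigma_L^{[i,j]2}\le t\} = \alpha t^{\psi} + o(t^\psi)$, which I would promote to a uniform bound $\textrm{Pr}\{\sigma_L^{[i,j]2}\le t\}\le \alpha' t^{\psi}$ valid for \emph{all} $t>0$ by enlarging the constant, the bound being trivial once $\alpha' t^\psi\ge 1$.

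To combine these, I would use the event inclusion $\{\sigma_1^{[i,j]2}/\sigma_L^{[i,j]2} > x\}\subseteq \{\sigma_1^{[i,j]2}>a\}\cup\{\sigma_L^{[i,j]2}<a/x\}$ and bound the first piece by the Frobenius/Gamma tail and the second by $\alpha'(a/x)^\psi$. Rather than fixing a single deterministic threshold $a$, the clean constant $\rho$ emerges by integrating over the value of $\sigma_1^{[i,j]2}$: since $\sigma_1^{[i,j]2}$ is $O(1)$ with all moments finite, the dominant contribution comes from $\sigma_1^{[i,j]2}=O(1)$, yielding $1-F_c(x)\le \alpha'\,E\big[(\sigma_1^{[i,j]2})^{\psi}\big]\,x^{-\psi} + (\text{super-polynomially small})=\rho x^{-\psi}$, with $\rho$ depending only on $K$, $S$, and $L$ through $\alpha'$ and the finite moment $E[(\sigma_1^{[i,j]2})^{\psi}]$. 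Because $F_c(x)\ge 0$ always, the stated bound $F_c(x)\ge 1-\rho x^{-\psi}$ then holds for every $x$, being vacuous in the small-$x$ range.

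The one genuine obstacle is the statistical dependence between $\sigma_1^{[i,j]2}$ and $\sigma_L^{[i,j]2}$, so the expectation bound is not an immediate product: a crude union bound with a $\log x$-scale threshold would leak a spurious $(\log x)^{\psi}$ factor. The hard part is therefore showing that atypically large values of $\sigma_1^{[i,j]2}$ contribute negligibly, which I would handle by exploiting the exponentially light upper tail of $\sigma_1^{[i,j]2}$ (equivalently, the finiteness of $E[(\sigma_1^{[i,j]2})^{\psi}]$) so that the exponent stays exactly at $\psi$. I note that even the polylog slack from the naive route would be harmless in the application, where $1-F_c$ is raised to the power $KN$ with $N$ polynomial in SNR and $x\sim\textrm{SNR}^{1+\gamma}$, so a polynomial factor still dominates; but the sharp $\rho x^{-\psi}$ follows from the integrated argument.
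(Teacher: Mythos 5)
Your setup is sound and matches the paper's implicit reasoning: the blocks $\mathbf{U}_k^{\textrm{H}}\mathbf{H}_k^{[i,j]}$ are again i.i.d.\ complex Gaussian by rotational invariance, so ${\mathbf{G}^{[i,j]}}^{\textrm{H}}\mathbf{G}^{[i,j]} \sim \mathcal{CW}_L\left((K-1)S,\mathbf{I}_L\right)$ and the condition number is scale-invariant. The divergence is in the core step. The paper's proof is essentially a one-line citation: it invokes the high-tail bound on the condition-number distribution of complex Wishart matrices from \cite[Theorem 4]{M_Matthaiou10_TC}, which directly delivers $F_c(x)\ge 1-\rho x^{-\psi}$ with $\psi=(K-1)S-L+1$ and an explicit $\rho=\sum_{l=1}^{L}\kappa_l\mu_l$. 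You instead try to rebuild this tail from two marginal facts --- the smallest-eigenvalue scaling of \cite[Theorem 4]{S_Jin08_TC} (your uniform promotion $\textrm{Pr}\{{\sigma_L^{[i,j]}}^2\le t\}\le \alpha' t^{\psi}$ is fine) and the light upper tail of ${\sigma_1^{[i,j]}}^2$.

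Here is the genuine gap: the ``integrated argument'' you rely on to obtain the sharp constant, namely $1-F_c(x)\le \alpha'\,E\bigl[({\sigma_1^{[i,j]}}^2)^{\psi}\bigr]x^{-\psi}$, silently applies the \emph{marginal} tail bound for ${\sigma_L^{[i,j]}}^2$ inside an expectation conditioned on ${\sigma_1^{[i,j]}}^2$. Since the ordered eigenvalues are dependent (indeed positively constrained, as ${\sigma_L^{[i,j]}}^2\le{\sigma_1^{[i,j]}}^2$ and the joint density carries the repulsion factor $\prod(\lambda_l-\lambda_m)^2$), you would need a bound of the form $\textrm{Pr}\{{\sigma_L^{[i,j]}}^2\le t \mid {\sigma_1^{[i,j]}}^2=s\}\le C(s)\,t^{\psi}$ with controlled growth of $C(s)$, which your ingredients do not supply; establishing it amounts to redoing the joint-eigenvalue-density computation that \cite[Theorem 4]{M_Matthaiou10_TC} already performs. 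What your marginals rigorously give --- via the threshold/dyadic decomposition you yourself outline --- is only $1-F_c(x)\le C(\log x)^{\psi}x^{-\psi}$, which is strictly weaker than the lemma's claim of a constant $\rho$, since the polylog cannot be absorbed into an $x$-independent constant. Two mitigating observations in your favor: you correctly note that the polylog-lossy version suffices for the only downstream use (Appendix A, where $x\sim\textrm{SNR}^{1+\gamma}$ and $N=O(\textrm{SNR}^{(1+\beta)\psi})$ with $\beta<\gamma$, so $KN(\log x)^{\psi}x^{-\psi}\rightarrow 0$ still holds); and your heuristic constant $\rho\propto E[({\sigma_1^{[i,j]}}^2)^{\psi}]$ correctly anticipates the structure of the true constant, whose factors $\mu_l=\int_0^{\infty}\lambda_1^{\psi}g_{L-1,(K-1)S}(\lambda_1)\,d\lambda_1$ are exactly $\psi$-th-moment integrals of a largest eigenvalue of a reduced Wishart matrix. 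But as a proof of the lemma as stated, the sharp bound is not established.
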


\begin{proof}
Since each channel coefficient is assumed to be chosen from a
continuous distribution, $\mathbf{G}^{[i,j]}$ has full rank almost
surely \cite{A_Edelman89_PhD}.  Moreover, assuming that $(K-1)S>L$,
${\mathbf{G}^{[i,j]}}^{\textrm{H}} \mathbf{G}^{[i,j]}$ is the
full-rank central Wishart matrix, i.e.,
${\mathbf{G}^{[i,j]}}^{\textrm{H}} \mathbf{G}^{[i,j]} \sim
\mathcal{CW}_{L}\left((K-1)S, \mathbf{I}_L\right)$. Using the
high-tail distribution of the complementary CDF in \cite[Theorem
4]{M_Matthaiou10_TC}, the CDF is bounded by (\ref{eq:F_c_def}),
where $ \rho = \left( \sum_{l=1}^{L} \kappa_l \mu_l\right)$. Here,
$\kappa_l = \frac{(-1)^{l(l-1)/2}}{((K-1)S-L+1)!}$ and $\mu_l =
\int_{0}^{\infty} \lambda_1^{((K-1)S-L+1)} g_{L-1, (K-1)S}
(\lambda_1) d\lambda_1$, where $\lambda_1 \triangleq
{\sigma^{[i,j]}_1}^2$ and $g_{L-1, (K-1)S} (\cdot)$ denotes the
distribution of the largest eigenvalue of a reduced Wishart matrix
$\tilde{\mathbf{G}} \sim \mathcal{CW}_{L-1}\left((K-1)S,
\mathbf{I}_{L-1}\right)$. Thus, $\rho$ is determined only by $K$,
$S$, and $L$.
\end{proof}

Now, from Lemma \ref{lemma:CDF_cond}, (\ref{eq:p_c_LB2}) is further
bounded by
\begin{align}
p_c &\!\ge\! \left( 1\!-\!\rho (1+\delta)^{(K-1)S+L-1}\textrm{SNR}^{-(1+\gamma)((K-1)S-L+1)}  \right)^{KN} \nonumber\\
\label{eq:p_c_LB_APP}&\ge 1\!-\!\rho KN
(1+\delta)^{(K-1)S+L-1}\textrm{SNR}^{-(1+\gamma)((K-1)S-L+1)},
\end{align}
where (\ref{eq:p_c_LB_APP}) follows from $(1-x)^y> 1-xy$ for any
$0<x<1<y$. Therefore, if $N$ scales slower than
$\textrm{SNR}^{(1+\gamma)((K-1)S-L+1)}$ for given  $\gamma$, i.e.,
$N=O\left(\textrm{SNR}^{(1+\beta)((K-1)S-L+1)}\right)$ where
$\beta<\gamma$, then $\lim_{\textrm{SNR}\rightarrow
\infty}p_c\rightarrow 1$ for any given $\delta$, which proves the
argument (\ref{eq:p_c_asympt}).

\section{Proof of Lemma \ref{lemma:mismatch}}\label{app:modified_Rx}
%From (\ref{eq:I_GMI_p}), let us rewrite $I(\theta)$ by
%\begin{equation} \label{eq:I_GMI_p}
%I(\theta) = E \left[ \log_2 \frac{\pi^{S\theta} Q^{\theta} (\tilde{\mathbf{y}}_i|\mathbf{x}_i)}{E \left[ \pi^{S\theta}Q^{\theta}(\tilde{\mathbf{y}}_i|\mathbf{x}_i)\big|\tilde{\mathbf{y}}_i, \tilde{\mathbf{H}}_i\right]}\bigg| \tilde{\mathbf{H}}_i\right].
%\end{equation}

Let us first consider the numerator of the logarithmic term in
(\ref{eq:I_GMI_p}) as follows.
\begin{align}
E\left[ \log_2  Q (\tilde{\mathbf{y}}_i|\mathbf{x}_i)^{\theta}
|\tilde{\mathbf{H}}_i \right]
&= -\frac{\theta}{\log 2} E \left[ (\tilde{\mathbf{y}}_i - \tilde{\mathbf{H}}_i\mathbf{x}_i)^{\textrm{H}} \hat{\mathbf{R}}^{-1} (\tilde{\mathbf{y}}_i - \tilde{\mathbf{H}}_i\mathbf{x}_i)\big|\tilde{\mathbf{H}}_i\right] \nonumber\\
& =  -\frac{\theta}{\log 2} E \left[ \tilde{\mathbf{z}}_i^{\textrm{H}} \hat{\mathbf{R}}^{-1} \tilde{\mathbf{z}}_i\big|\tilde{\mathbf{H}}_i\right] \nonumber\\
\label{eq:num_final_p}& =-\frac{\theta}{\log 2}
\textrm{tr}(\hat{\mathbf{R}}^{-1/2}\mathbf{R}\hat{\mathbf{R}}^{-1/2}),
\end{align}
where (\ref{eq:num_final_p}) follows from $E[\mathbf{z}^{\textrm{H}}
\mathbf{A} \mathbf{z} ] = \textrm{tr}(\mathbf{A}^{1/2}
\mathbf{R}\mathbf{A}^{1/2})$ for any random vector $\mathbf{z}$ with
$E\{\mathbf{z}\mathbf{z}^{\textrm{H}}\} = \mathbf{R}$ and
$E\{\mathbf{z}\} = \mathbf{0}$ and for any conjugate symmetric
matrix $\mathbf{A}$.

Let us turn to the denominator of the logarithmic term in
(\ref{eq:I_GMI_p}). Here, given that $\tilde{\mathbf{H}}_i$ and
$\tilde{\mathbf{y}}_i$ are deterministic, the expectation is taken
over $\mathbf{x}_i$. Since $\mathbf{x}_i$ is an $S$-dimensional
complex Gaussian random vector, i.e., the probability density
function of $\mathbf{x}_i$ is given by $1/\pi^{S} \exp\left(
-\|\mathbf{x}_i\|^2\right)$, we have
\begin{align}
&E \left[ Q(\tilde{\mathbf{y}}_i|\mathbf{x}_i)^{\theta}\big|\tilde{\mathbf{y}}_i, \tilde{\mathbf{H}}_i\right] \nonumber\\ &= \int \exp\left( -\theta(\tilde{\mathbf{y}}_i - \tilde{\mathbf{H}}_i\mathbf{x}_i)^{\textrm{H}} \hat{\mathbf{R}}^{-1} (\tilde{\mathbf{y}}_i - \tilde{\mathbf{H}}_i\mathbf{x}_i)\right) \frac{1}{\pi^{S}} \exp\left( -\|\mathbf{x}_i\|^2\right) d\mathbf{x}_i \nonumber\\
& = \frac{1}{\pi^{S}} \int \exp\bigl( -\theta(\tilde{\mathbf{y}}_i - \tilde{\mathbf{H}}_i\mathbf{x}_i)^{\textrm{H}} \hat{\mathbf{R}}^{-1} (\tilde{\mathbf{y}}_i - \tilde{\mathbf{H}}_i\mathbf{x}_i)-\|\mathbf{x}_i\|^2\bigr)d\mathbf{x}_i \nonumber\\
\label{eq:denom_p}& = \frac{1}{\pi^{S}} \int \exp\left(
-A\right)d\mathbf{x}_i.
\end{align}
Here, $A$ can be further expressed as
\begin{align}
A &\triangleq \theta (\tilde{\mathbf{y}}_i - \tilde{\mathbf{H}}_i\mathbf{x}_i)^{\textrm{H}} \hat{\mathbf{R}}^{-1} (\tilde{\mathbf{y}}_i - \tilde{\mathbf{H}}_i \mathbf{x}_i)+\|\mathbf{x}_i\|^2 \nonumber \\
\label{eq:A2_p}&= \mathbf{x}_i^{\textrm{H}}(
\theta\tilde{\mathbf{H}}_i^{\textrm{H}}\hat{\mathbf{R}}^{-1}\tilde{\mathbf{H}}_i
+ \mathbf{I}_S)\mathbf{x}_i -\theta\tilde{\mathbf{y}}_i^{\textrm{H}}
\hat{\mathbf{R}}^{-1}\tilde{\mathbf{H}}_i\mathbf{x}_i - \theta
\mathbf{x}_i^{\textrm{H}}\tilde{\mathbf{H}}_i^{\textrm{H}}\hat{\mathbf{R}}^{-1}
\tilde{\mathbf{y}}_i + \theta
\tilde{\mathbf{y}}_i^{\textrm{H}}\hat{\mathbf{R}}^{-1}\tilde{\mathbf{y}}_i.
\end{align}
Letting $\tilde{\boldsymbol{\Omega}}=
\theta\tilde{\mathbf{H}}_i^{\textrm{H}}\hat{\mathbf{R}}^{-1}\tilde{\mathbf{H}}_i
+ \mathbf{I}_S$, it follows that
\begin{align}
\label{eq:A3_p}A &= \left\| \tilde{\boldsymbol{\Omega}}^{1/2}
\mathbf{x}_i - \mathbf{C}\tilde{\mathbf{y}}_i \right\|^2 +
\theta\tilde{\mathbf{y}}_i^{\textrm{H}}\hat{\mathbf{R}}^{-1}\tilde{\mathbf{y}}_i-
\tilde{\mathbf{y}}_i^{\textrm{H}} \mathbf{C}^{\textrm{H}} \mathbf{C}
\tilde{\mathbf{y}}_i,
\end{align}
where $\mathbf{C}$ is given by $\mathbf{C} =
\theta\tilde{\boldsymbol{\Omega}}^{-1/2}
\tilde{\mathbf{H}}_i^H\hat{\mathbf{R}}^{-1}$, which comes from the
equivalence of (\ref{eq:A2_p}) and (\ref{eq:A3_p}). Now let us
further simplify the last two terms of (\ref{eq:A3_p}) as
\begin{align}
\theta\tilde{\mathbf{y}}_i^{\textrm{H}}\hat{\mathbf{R}}^{-1}\tilde{\mathbf{y}}_i - \tilde{\mathbf{y}}_i^{\textrm{H}} \mathbf{C}^{\textrm{H}} \mathbf{C} \tilde{\mathbf{y}}_i  & = \theta \tilde{\mathbf{y}}_i^{\textrm{H}}\left( \hat{\mathbf{R}}^{-1} - \theta \hat{\mathbf{R}}^{-1}\tilde{\mathbf{H}}_i \tilde{\boldsymbol{\Omega}}^{-1} \tilde{\mathbf{H}}_i^{\textrm{H}}\hat{\mathbf{R}}^{-1} \right) \tilde{\mathbf{y}}_i \nonumber\\
&= \theta
\tilde{\mathbf{y}}_i^{\textrm{H}}\hat{\mathbf{R}}^{-1/2}\left(
\mathbf{I}_S - \theta \hat{\mathbf{R}}^{-1/2}\tilde{\mathbf{H}}_i
\tilde{\boldsymbol{\Omega}}^{-1}
\tilde{\mathbf{H}}_i^{\textrm{H}}\hat{\mathbf{R}}^{-1/2} \right)
\hat{\mathbf{R}}^{-1/2} \tilde{\mathbf{y}}_i. \nonumber
\end{align}
Without loss of generality, it follows that
$\hat{\mathbf{R}}^{-1/2}\tilde{\mathbf{H}}_i = \boldsymbol{\Phi}
\boldsymbol{\Lambda} \mathbf{T}^{\textrm{H}}$, where
$\boldsymbol{\Phi}\in \mathbb{C}^{S \times S}$ and $\mathbf{T}\in
\mathbb{C}^{S \times S}$ are orthogonal matrices and
$\boldsymbol{\Lambda}$ is an $(S\times S)$-dimensional diagonal
matrix. Then, we get
\begin{align}
\tilde{\boldsymbol{\Omega}} =
\theta\tilde{\mathbf{H}}_i^{\textrm{H}}\hat{\mathbf{R}}^{-1}\tilde{\mathbf{H}}_i
+ \mathbf{I}_S  = \mathbf{T}\left( \theta \boldsymbol{\Lambda}^2 +
\mathbf{I}_S \right) \mathbf{T}^{\textrm{H}}.
\label{eq:tilde_sigma_p}
\end{align}
Inserting (\ref{eq:tilde_sigma_p}) to
$\hat{\mathbf{R}}^{-1/2}\tilde{\mathbf{H}}_i
\tilde{\boldsymbol{\Omega}}^{-1}
\tilde{\mathbf{H}}_i^{\textrm{H}}\hat{\mathbf{R}}^{-1/2}$  gives us
\begin{align}
\hat{\mathbf{R}}^{-1/2}\tilde{\mathbf{H}}_i
\tilde{\boldsymbol{\Omega}}^{-1}
\tilde{\mathbf{H}}_i^{\textrm{H}}\hat{\mathbf{R}}^{-1/2}  &=
\boldsymbol{\Phi} \boldsymbol{\Lambda}^2 \left( \theta
\boldsymbol{\Lambda}^2 + \mathbf{I}_S \right)^{-1}
\boldsymbol{\Phi}^{\textrm{H}}, \nonumber
\end{align}
which yields
\begin{align}
\theta\tilde{\mathbf{y}}_i^{\textrm{H}}\hat{\mathbf{R}}^{-1}\tilde{\mathbf{y}}_i
- \tilde{\mathbf{y}}_i^{\textrm{H}} \mathbf{C}^{\textrm{H}}
\mathbf{C} \tilde{\mathbf{y}}_i
 &=  \theta \tilde{\mathbf{y}}_i^{\textrm{H}}\hat{\mathbf{R}}^{-1/2}\boldsymbol{\Phi}\left( \mathbf{I}_S - \theta \boldsymbol{\Lambda}^2 \left( \theta \boldsymbol{\Lambda}^2 + \mathbf{I}_S \right)^{-1}  \right) \boldsymbol{\Phi}^{\textrm{H}}\hat{\mathbf{R}}^{-1/2}\tilde{\mathbf{y}}_i \nonumber\\
\label{eq:const_term2_p} &= \theta \tilde{\mathbf{y}}_i^{\textrm{H}}\hat{\mathbf{R}}^{-1/2}\boldsymbol{\Phi} \left( \theta \boldsymbol{\Lambda}^2 + \mathbf{I}_S \right)^{-1}  \boldsymbol{\Phi}^{\textrm{H}}\hat{\mathbf{R}}^{-1/2}\tilde{\mathbf{y}}_i \\
& = \theta \tilde{\mathbf{y}}_i^{\textrm{H}}\hat{\mathbf{R}}^{-1/2} \left( \theta\hat{\mathbf{R}}^{-1/2}\tilde{\mathbf{H}}_i\tilde{\mathbf{H}}_i^{\textrm{H}}\hat{\mathbf{R}}^{-1/2} + \mathbf{I}_S \right)^{-1}\hat{\mathbf{R}}^{-1/2}\tilde{\mathbf{y}}_i \nonumber\\
& = \theta \tilde{\mathbf{y}}_i^{\textrm{H}}\left( \theta\tilde{\mathbf{H}}_i\tilde{\mathbf{H}}_i^{\textrm{H}} + \hat{\mathbf{R}} \right)^{-1}\tilde{\mathbf{y}}_i \nonumber\\
\label{eq:const_term5_p}& = \theta
\tilde{\mathbf{y}}_i^{\textrm{H}}\boldsymbol{\Omega}^{-1}\hat{\mathbf{R}}^{-1}\tilde{\mathbf{y}}_i,
\end{align}
where (\ref{eq:const_term2_p}) follows immediately from evaluating
the diagonal terms, and (\ref{eq:const_term5_p}) follows from
$\boldsymbol{\Omega} \triangleq
\theta\hat{\mathbf{R}}^{-1}\tilde{\mathbf{H}}_i\tilde{\mathbf{H}}_i^{\textrm{H}}
+ \mathbf{I}_S$. Inserting (\ref{eq:const_term5_p}) and
(\ref{eq:A3_p}) to (\ref{eq:denom_p}) gives us\pagebreak[0]
\begin{align}
& E \left[ Q (\tilde{\mathbf{y}}_i|\mathbf{x}_i)^{\theta}|\tilde{\mathbf{y}}_i, \tilde{\mathbf{H}}_i\right] \nonumber\\
&= \frac{1}{\pi^{S}} \int \exp\left(\!-\! \left\| \tilde{\boldsymbol{\Omega}}^{1/2} \mathbf{x}_i \!-\! \mathbf{C}\tilde{\mathbf{y}}_i \right\|^2 \!-\!  \theta \tilde{\mathbf{y}}_i^{\textrm{H}} \boldsymbol{\Omega}^{-1}\hat{\mathbf{R}}^{-1}\tilde{\mathbf{y}}_i\!\right) d\mathbf{x}_i \nonumber\\
& = \frac{1}{\pi^{S}} \exp \left( -  \theta \tilde{\mathbf{y}}_i^{\textrm{H}} \boldsymbol{\Omega}^{-1}\hat{\mathbf{R}}^{-1}\tilde{\mathbf{y}}_i\right) \int \exp\left(- \left\| \tilde{\boldsymbol{\Omega}}^{1/2} \mathbf{x}_i - \mathbf{C}\tilde{\mathbf{y}}_i \right\|^2 \right) d\mathbf{x}_i \nonumber\\
\label{eq:denom_final15_p}&= \frac{1}{\pi^{S}} \exp \left( -  \theta \tilde{\mathbf{y}}_i^{\textrm{H}} \boldsymbol{\Omega}^{-1}\hat{\mathbf{R}}^{-1}\tilde{\mathbf{y}}_i\right)  \pi^{S} \det \left(\tilde{\boldsymbol{\Omega}}^{-1}\right) \\
\label{eq:denom_fianl2_p}& = \exp \left( -  \theta
\tilde{\mathbf{y}}_i^{\textrm{H}}
\boldsymbol{\Omega}^{-1}\hat{\mathbf{R}}^{-1}\tilde{\mathbf{y}}_i\right)
\det \left(\boldsymbol{\Omega}^{-1}\right),
\end{align}
where (\ref{eq:denom_final15_p}) follows from the fact that for
$\mathbf{x}, \mathbf{m}\in \mathbb{C}^{S \times 1}$ and conjugate
symmetric $\mathbf{A}, \mathbf{B}\in \mathbb{C}^{S \times S}$,
\begin{align}
&\int \exp \left(- (\mathbf{A}\mathbf{x}-\mathbf{m})^{\textrm{H}}\mathbf{B}^{-1} (\mathbf{A}\mathbf{x}-\mathbf{m}) \right) d\mathbf{x} \nonumber\\ &= \int \exp \left(-(\mathbf{x}-\mathbf{A}^{-1}\mathbf{m})^{\textrm{H}}\mathbf{A}^{\textrm{H}}\mathbf{B}^{-1}\mathbf{A} (\mathbf{x}-\mathbf{A}^{-1}\mathbf{m}) \right) d\mathbf{x}\nonumber \\
&=\pi^{S}
\det\left(\mathbf{A}^{-1}\mathbf{B}(\mathbf{A}^{\textrm{H}})^{-1}\right),
\nonumber
\end{align}
and (\ref{eq:denom_fianl2_p}) follows from $\det(
\tilde{\boldsymbol{\Omega}} ) =\det( \boldsymbol{\Omega} )$.
Inserting (\ref{eq:num_final_p}) and (\ref{eq:denom_fianl2_p}) to
(\ref{eq:I_GMI_p}) gives us
\begin{align}
I(\theta) &= -\frac{\theta}{\log 2}  \textrm{tr}(\hat{\mathbf{R}}^{-1/2}\mathbf{R}\hat{\mathbf{R}}^{-1/2}) - E \left[ \log_2 \left(\exp \left( -  \theta \tilde{\mathbf{y}}_i^{\textrm{H}} \boldsymbol{\Omega}^{-1}\hat{\mathbf{R}}^{-1}\tilde{\mathbf{y}}_i\right) \det \left(\boldsymbol{\Omega}^{-1}\right) \right)\big|\tilde{\mathbf{H}}_i\right] \nonumber \\
 & = -\frac{\theta}{\log 2} \textrm{tr}(\hat{\mathbf{R}}^{-1/2}\mathbf{R}\hat{\mathbf{R}}^{-1/2})  + \frac{\theta}{\log2}E\left[\tilde{\mathbf{y}}_i^{\textrm{H}} \boldsymbol{\Omega}^{-1}\hat{\mathbf{R}}^{-1}\tilde{\mathbf{y}}_i\big| \tilde{\mathbf{H}}_i \right] + \log_2 \det
 (\boldsymbol{\Omega}). \nonumber
\end{align}
From
\begin{align}
E\left[\tilde{\mathbf{y}}_i^{\textrm{H}} \boldsymbol{\Omega}^{-1}\hat{\mathbf{R}}^{-1}\tilde{\mathbf{y}}_i\big| \tilde{\mathbf{H}}_i \right]  & = E \left[ \left( \mathbf{x}_i^{\textrm{H}} \tilde{\mathbf{H}}_i^{\textrm{H}}+\mathbf{z}_i^{\textrm{H}}\right)\boldsymbol{\boldsymbol{\Omega}}^{-1}\hat{\mathbf{R}}^{-1}\left( \tilde{\mathbf{H}}_i \mathbf{x}_i+ \mathbf{z}_i \right)\right] \nonumber\\
& = \textrm{tr} \left( \boldsymbol{\Omega}^{-1}
\hat{\mathbf{R}}^{-1}\tilde{\mathbf{H}}_i\tilde{\mathbf{H}}_i^{\textrm{H}}
\right) + \textrm{tr} \left(
\boldsymbol{\Omega}^{-1}\hat{\mathbf{R}}^{-1}\mathbf{R}\right),
\nonumber
\end{align}
we finally have
\begin{align}
I(\theta) &= -\frac{\theta}{\log 2}
\textrm{tr}(\hat{\mathbf{R}}^{-1/2}\mathbf{R}\hat{\mathbf{R}}^{-1/2})
+ \frac{\theta}{\log2}\bigl[  \textrm{tr} \left(
\boldsymbol{\Omega}^{-1}
\hat{\mathbf{R}}^{-1}\tilde{\mathbf{H}}_i\tilde{\mathbf{H}}_i^{\textrm{H}}
\right) + \textrm{tr} \left(
\boldsymbol{\Omega}^{-1}\hat{\mathbf{R}}^{-1}\mathbf{R}\right)\bigr]
+ \log_2 \det (\boldsymbol{\Omega}), \nonumber
\end{align}
which completes the proof of the lemma.

%%%%%%%%%%%%%%%%%%%%%%%%%%%%%%%%%%%%%%%%%%%%%%%%%%%%%%%%%%%%%%%%%%%%%%%%%%%%%%%%%%%%%%%%%%%%%%%%%%%%%%%%%%%%%%%%%%%%%%%%%%%%%%%%%%%%%%%%%%%%%%%%%%%%%

\end{document}